\documentclass[USenglish,onecolumn]{article}

\usepackage[utf8]{inputenc}
\usepackage[big]{dgruyter}
 \usepackage{latexsym}
 \usepackage{amssymb,amsmath, bm,pgfplots,tikz,bbm}
 \usepackage{graphicx}
 \usepackage{multirow,float}
 \usepackage{caption}
 \usepackage{comment}
 \usepackage{makecell,centernot}
 \usepackage{color}
 \usepackage{bm}
 \usepackage{natbib}
 \usepackage{subcaption}
 \usepackage{algorithm}
 \usepackage{algpseudocode}
 
 

 \DeclareMathOperator\Cov{Cov}
 \DeclareMathOperator\Corr{Corr}

 \newcommand\iid{\stackrel{\rm i.i.d.}{\sim}}
 
 \renewcommand\r{\right}
 \renewcommand\l{\left}

 \newcommand\cZ{\mathcal{Z}}
 \newcommand\E{\mathbb{E}}
 
 \newcommand\V{\mathbb{V}}

 \newcommand\cX{\mathcal{X}}
 
  \newcommand\cS{\mathcal{S}}

 \newcommand\bx{\bm{x}}
 \newcommand\bX{\bm{X}}

 \newcommand\bT{\bm{T}}
 \newcommand\bt{\bm{t}}
 
 \newcommand\bZ{\bm{Z}}
 
 \newcommand\bF{\bm{F}}

 \newcommand\bone{\mathbf{1}}

 \usepackage{rotating}
 
 \usepackage{arydshln}
 \usepackage{threeparttable}
 \usepackage{pifont}
 \usepackage{xcolor}

\newtheorem{theorem}{Theorem}
\newtheorem{proposition}{Proposition}
\newtheorem{assumption}{Assumption}

\newtheorem{lemma}{Lemma}

\usetikzlibrary{decorations.markings}
\usetikzlibrary{decorations.pathmorphing}
\usetikzlibrary{shapes.geometric, arrows}
\usetikzlibrary{arrows,decorations.pathmorphing,backgrounds,positioning,fit,matrix}
\usetikzlibrary{shapes,decorations,arrows,calc,arrows.meta,fit,positioning}
\tikzset{
	-Latex,auto,node distance =1 cm and 1 cm,semithick,
	state/.style ={circle, draw, minimum width = 0.7 cm},
	point/.style = {circle, draw, inner sep=0.04cm,fill,node contents={}},
	bidirected/.style={Latex-Latex,dashed},
	el/.style = {inner sep=2pt, align=left, sloped}
}
\usetikzlibrary{positioning}
\usetikzlibrary{fadings}
\usetikzlibrary{intersections}
\usepackage{kantlipsum}
\allowdisplaybreaks

\begin{document}
  \articletype{Research Article{\hfill}Open Access}

  \author*[1]{Michael Lingzhi Li}

\author[2]{Kosuke Imai}

  \affil[1]{Harvard Business School; E-mail: mili@hbs.edu}

  \affil[2]{Harvard University; E-mail: imai@harvard.edu}

  \title{\huge Neyman Meets Causal Machine Learning: Experimental
    Evaluation of Individualized Treatment Rules}

  \runningtitle{Neyman Meets Causal Machine Learning}


  \begin{abstract}
    {A century ago, Neyman showed how to evaluate the efficacy of
      treatment using a randomized experiment under a minimal set of
      assumptions.  This classical repeated sampling framework serves
      as a basis of routine experimental analyses conducted by today's
      scientists across disciplines.  In this paper, we demonstrate
      that Neyman's methodology can also be used to experimentally
      evaluate the efficacy of individualized treatment rules (ITRs),
      which are derived by modern causal machine learning algorithms.
      In particular, we show how to account for additional uncertainty
      resulting from a training process based on cross-fitting.  The
      primary advantage of Neyman's approach is that it can be applied
      to any ITR regardless of the properties of machine learning
      algorithms that are used to derive the ITR. We also show,
      somewhat surprisingly, that for certain metrics, it is more
      efficient to conduct this \emph{ex-post} experimental evaluation
      of an ITR than to conduct an \emph{ex-ante} experimental
      evaluation that randomly assigns some units to the ITR.  Our
      analysis demonstrates that Neyman's repeated sampling framework
      is as relevant for causal inference today as it has been since
      its inception.}
\end{abstract}
  \keywords{causal inference, machine learning, individualized
    treatment rule, policy evaluation, repeated sampling}
   \classification[MSC]{62G05}

  \journalname{Journal of Causal Inference}
\DOI{DOI}
  \startpage{1}
  \received{..}
  \revised{..}
  \accepted{..}

  \journalyear{2023}
  \journalvolume{1}

\maketitle
\section{Introduction}

Neyman's seminal 1923 paper introduced two foundational ideas in
causal inference \cite{neyman1923application}.  First, Neyman
developed a formal notation for potential outcomes and defined the
average treatment effect (ATE) as a causal quantity of interest.
Second, he showed how randomization of treatment assignment alone can
be used to establish the unbiasedness and estimation uncertainty of
the standard difference-in-means estimator.  Since then, combined with
the additional assumption of random sampling of units, Neyman's
repeated sampling framework has served as a basis of routine
experimental analyses conducted by scientists across many disciplines.

Over the past two decades, however, the causal inference literature
has gone beyond the ATE.  Specifically, the realization that the same
treatment can have varying impacts on different individuals led to the
development of statistical methods and machine learning algorithms for
estimating heterogeneous treatment effects
\citep[e.g.,][]{imai:ratk:13,athe:imbe:16,wage:athe:18,hahn:murr:carv:20}.
Furthermore, a number of researchers have developed various methods
for deriving data-driven individualized treatment rules (ITRs)
\citep[e.g.][]{dudi:etal:11, zhan:etal:12a,chak:labe:zhao:14,
  jian:li:16, kall:18, qi2020multi, mo2022efficient, benm:etal:21}.
With an increasing availability of granular data and modern computing
power, these ITRs are becoming popular in business, medicine,
politics, and even public policy.

In this paper, we demonstrate that Neyman's repeated sampling
framework is still relevant for today's causal machine learning
methods. We show how the framework can be used to experimentally
evaluate the efficacy of any ITRs (including those obtained with
machine learning algorithms via cross-fitting) under a minimal set of
assumptions.  While some of our formal results are originally derived
in our previously published work \citep{imai2021experimental} or
follow directly from them, we focus on the intuition behind those
theoretical results to facilitate the future extensions to other
settings.

We also show, using Neyman's framework, that it is not always
statistically more efficient to evaluate an ITR by conducting a new
randomized experiment where the treatment is the administration of the
ITR itself (i.e., {\it ex-ante} evaluation) than simply using the data
from an existing randomized controlled trial (i.e., {\it ex-post}
evaluation).  All together, this paper shows how Neyman's classical
methodological framework can be applied to solve today's causal
inference problems.

\section{Neyman's Repeated Sampling Framework}

We begin by briefly introducing Neyman's inferential approach to
estimating the average treatment effect (ATE).  Suppose that we have a
sample of $n$ units and for each unit we define two potential
outcomes, $Y_i(1)$ and $Y_i(0)$, under the treatment and control
conditions, respectively. Let $T_i$ denote the binary treatment
assignment variable, which is the $i$th element of $n$-dimensional
treatment vector $\bT$.  Then, the observed outcome can be written as
$Y_i=T_iY_i(1)+(1-T_i)Y_i(0)$.  Finally, $\bX_i$ denotes a set of
observed pre-treatment covariates for unit $i$ where $\cX$ represents
the support of covariate distribution.

As pointed out by Rubin in his discussion of Neyman's 1923 paper
\cite{rubi:90}, the above setup implicitly assume no interference
between units --- the outcome of one unit is not influenced by the
treatment of another unit.  We explicitly state this assumption below.
\begin{assumption}[No Interference between Units] \label{asm:SUTVA}
	The potential outcomes for unit $i$ do not depend
	on the treatment status of other units.  That is, for all
	$t_1, t_2,\ldots,t_n \in \{0, 1\}$, we have,
	$$Y_i(T_1 = t_1, T_2 = t_2, \ldots, T_n = t_n) \ = \ Y_i(T_i = t_i).$$
\end{assumption}

Neyman considered the classical randomized experiment where the
treatment assignment is completely randomized with $n_1$ units
assigned to the treatment condition and the remaining $n_0=n-n_1$
units assigned to the control condition.
\begin{assumption}[Complete Randomization of Treatment
  Assignment] \label{asm:comrand} The
  treatment assignment probability is given by,
  $$\Pr(\bT = \bt \mid \{Y_i(1), Y_i(0), \bX_i\}_{i=1}^n ) \ = \ \frac{1}{{n \choose n_1}}$$
  for each $\bt$ where $\sum_{i=1}^n t_i = n_1$.
\end{assumption}

Under these two assumptions alone, Neyman showed the following sample
average treatment effect (SATE) can be estimated without bias,
\begin{equation*}
  \tau_{\text{SATE}} \ = \ \frac{1}{n} \sum_{i=1}^n \{Y_i(1) - Y_i(0)\}.
\end{equation*}
using the difference-in-means estimator $\hat\tau$,
\begin{equation*}
  \E(\hat\tau \mid \{Y_i(1), Y_i(0)\}_{i=1}^n) \ = \
  \tau_{\text{SATE}} \quad \text{ where } \hat\tau = \frac{1}{n_1}\sum_{i=1}^n
T_iY_i - \frac{1}{n_0} \sum_{i=1}^n (1-T_i)Y_i.
\end{equation*}
Neyman also showed that the variance of this estimator is not
identifiable but a conservative variance can be estimated from the
data without bias,
\begin{equation*}
  \V(\hat\tau \mid \{Y_i(1), Y_i(0)\}_{i=1}^n) \ = \
  \frac{1}{n}\left(\frac{n_0}{n_1}S_1^2 + \frac{n_1}{n_0}S_0^2 + 2 S_{01}\right)\  \leq  \ \frac{1}{n}\left(\frac{n_0}{n_1}S_1^2 + \frac{n_1}{n_0}S_0^2 + (S_1^2+ S_0^2)\right) \ = \
  \frac{S_1^2}{n_1} + \frac{S_0^2}{n_0}.
\end{equation*}
where $S_t^2 = \sum_{i=1}^n (Y_i(t) - \overline{Y(t)})^2/(n-1)$,
$S_{01}=\sum_{i=1}^n (Y_i(0) -
\overline{Y(0)})(Y_i(1)-\overline{Y(1)})/(n-1)$ and
$\overline{Y(t)}=\sum_{i=1}^n Y_i(t)/n_t$ for $t=0,1$.

Neyman obtained the above results by averaging over all possible
treatment assignments under complete randomization.  Subsequent work
has extended Neyman's framework to a superpopulation framework by
assuming that the sample of $n$ units are obtained, through random
sampling, from a superpopulation of infinite size $\mathcal{P}$.
\begin{assumption}[Random Sampling of Units] \label{asm:randomsample}
	Each of $n$ units, represented by a three-tuple
	consisting of two potential outcomes and pre-treatment covariates,
	is assumed to be independently sampled from a super-population
	$\mathcal{P}$, i.e.,
	$$(Y_i(1), Y_i(0), \bX_i) \ \iid \ \mathcal{P}.$$
\end{assumption}
This extended framework, which we call Neyman's repeated sampling
framework, is useful because it allows us to estimate the population
average treatment effect from the sample,
\begin{equation*}
  \tau_{\text{PATE}} \ = \ \E(Y_i(1)-Y_i(0)).
\end{equation*}
Subsequent work has shown that the difference-in-means estimator is
unbiased for the PATE and the exact variance can be estimated without
bias \citep{ding2017bridging}.
\begin{eqnarray}
  \E(\hat\tau) & = & \E[\E(\hat\tau \mid \{Y_i(1), Y_i(0)\}_{i=1}^n)]
                     \ = \ \E[\tau_{\text{SATE}}] \ = \ \tau_{\text{PATE}}, \label{eq:diffmean}\\
  \V(\hat\tau) & = & \E[\V(\hat\tau \mid \{Y_i(1), Y_i(0)\}_{i=1}^n)]
                     + \V[\E(\hat\tau \mid \{Y_i(1),
                     Y_i(0)\}_{i=1}^n)] \ = \ \frac{\sigma_1^2}{n_1} + \frac{\sigma_0^2}{n_0},\label{eq:diffvar}
\end{eqnarray} 
where $\sigma_t^2 = \V(Y_i(t))$ for $t=0,1$.

In the remainder of this article, we will show that this Neyman's
repeated sampling framework enables an assumption-free experimental
evaluation of data-driven individualized treatment rules.

\section{Experimental evaluation of individualized treatment rules}
\label{sec:fixed}

In this section, we explain how Neyman's repeated sampling framework
can be applied to experimentally evaluate the empirical performance of
individualized treatment rules (ITRs), which assigns each individual
unit to either the treatment or control condition based on their
observed characteristics.

\subsection{Setup}

Suppose that we use a machine learning (ML) algorithm to create an
individualized treatment rule (ITR), 
\begin{equation*}
	f: \cX \longrightarrow \{0, 1\}.
\end{equation*}
Most commonly, researchers first estimate the conditional average
treatment effect \citep[see e.g.,][]{qian:murp:11,zhan:etal:12a,
	lued:vand:16a,lued:vand:16, zhou:etal:17,kita:tete:18}:
$$\tau(\bx) := \E[Y_i(1)-Y_i(0) \mid \bX_i = \bx].$$
They then
derive an ITR as the treatment rule that assigns the treatment to
everyone who is predicted to have a positive CATE, i.e.,
$f(\bx) = \mathbf{1}\{\hat\tau(\bx) > 0\}$ where $\hat\tau(\bx)$ is an
estimate of the CATE.  Throughout this paper, we assume, without loss
of generality, that a positive effect implies that the treatment is
beneficial.  One may also consider a cost associated with the
administration of treatment or a budget constraint that limits the
proportion of individuals who can receive the treatment.

Our goal is to evaluate the empirical performance of an ITR without
assuming that an ITR is indeed optimal.  In the above example, we do
not assume that $\hat\tau(\bx)$ is an accurate estimate of the CATE.
In fact, we do not make any assumption about how the ITR is
constructed and how accurate it is.  The ITR may be derived from an
application of an ML algorithm or be even based on heuristics.  For
now, we only assume that the ITR to be evaluated is given.  For
example, it may be estimated from an external data set to be used in a downstream decision-making context.  In
Section~\ref{sec:cross-validation}, we discuss how to use the same
experimental data for both learning and evaluating an ITR.

To measure the performance of an ITR, we consider two quantities.  The
first is the Population Average Value (PAV), which is defined as:
\begin{equation}
  \lambda_f := \E[Y_i(f(\bX_i))]. \label{eq:PAV}
\end{equation}
This is the standard metric of ITR's overall performance.  The second
quantity is the Population Average Prescriptive Effect (PAPE)
\cite{radcliffe2007using,imai2021experimental}, which measures the
benefit of individualizing treatment rule and is defined as follows,
\begin{equation}
  \tau_f := \E[Y_i(f(\bX_i))]-p_f\E[Y_i(1)] - (1-p_f)\E[Y_i(0)], \label{eq:PAPE}
\end{equation}
where $p_f = \Pr(f(\bX_i) = 1)$ represents the proportion of
individuals who are treated by the ITR $f$.

The PAPE compares the performance of ITR against the
non-individualized treatment rule that treats the same proportion of
randomly selected individuals.  This contrasts with other quantities
considered in the literature such as the targeting operator
characteristic (TOC) \citep{yadlowsky2021evaluating}, which compares
the performance of ITR against the non-individualized rule that treats
everyone.  Unlike the TOC, the PAPE focuses on the benefit of
determining which individuals should be treated while holding the
proportion of those who receive the treatment constant.

To gain additional intuition about the PAPE, consider the following
alternative but equivalent expression of the same quantity,
\begin{equation*}
  \tau_f \ = \ \Cov(f(\bX_i), Y_i(1)-Y_i(0)).
\end{equation*}
This alternative expression shows that the PAPE measures how well the
ITR agrees with the true individual treatment effect (ITE).
To compare across datasets, we can further normalize the PAPE
as the correlation between the ITR and the true ITE, i.e.,
\begin{equation*}
  \frac{\tau_f}{\sqrt{\V(f(\bX_i))\V(Y_i(1)-Y_i(0))}} = \Corr(f(\bX_i), Y_i(1)-Y_i(0)). 
\end{equation*}
Although this provides a scale invariant quantity to understand the
performance of ITR, it is not identifiable from the data because we
cannot identify the variance of ITE, i.e.,
$\V(Y_i(1)-Y_i(0))$.

The above equality further implies the following inequality by applying Cauchy-Schwarz twice: 
\begin{equation}
  \tau_f\leq \sqrt{\V(f(\bX_i))\V(Y_i(1)-Y_i(0))} \leq \sqrt{2p_f(1-p_f)(\V(Y_i(1)) + \V(Y_i(0)))}.\label{eq:PAPEbound}
\end{equation}  
Therefore PAPE is bounded provided that the second moments of the potential outcomes
exist. Thus, given a fixed variance of the potential outcomes, $\tau_f$ is
most likely largest around $p_f=0.5$. This is because when $p_f$ is
around $0.5$, the ITR has the greatest room to deviate from the
randomized treatment rule.

\subsection{Estimation and Inference}

To estimate the PAV and PAPE under Neyman's repeated sampling
framework, we consider the following ``difference-in-means'' type
estimators:
\begin{align}
	\hat\lambda_f(\bZ_n) \ &= \frac{1}{n_1} \sum_{i=1}^n Y_i T_i f(\bX_i) +
	\frac{1}{n_0} \sum_{i=1}^n Y_i(1-T_i)(1-f(\bX_i)),  \label{eq:PAVest}\\
	\hat\tau_f(\bZ_n) \ &= \ \frac{n}{n-1}\l[\frac{1}{n_1} \sum_{i=1}^n Y_i T_i f(\bX_i) +
	\frac{1}{n_0} \sum_{i=1}^n Y_i(1-T_i)(1-f(\bX_i)) -
	\frac{\hat{p}_f}{n_1} \sum_{i=1}^n Y_iT_i -
	\frac{1-\hat{p}_f}{n_0} \sum_{i=1}^n Y_i(1-T_i)\r],  \label{eq:PAPEest}
\end{align}
where $\hat{p}_f = \sum_{i=1}^n f(\bX_i)/n$ is the estimated
population proportion of individuals who receive the treatment
assignment and $\bZ_n=\{Y_i, T_i, \bX_i\}_{i=1}^n$ represents the
experimental data of sample size $n$.  The factor $n/(n-1)$ in the
PAPE estimator represents the loss in one degree-of-freedom due to
estimating the population-level quantity $p_f$.

The following theorems, reproduced from our previously published work
\cite{imai2021experimental}, shows that, under Neyman's repeated
sampling framework, these two estimators are unbiased and the
finite-sample variances can be derived.
\begin{theorem}[Unbiasedness and Variance of the PAV
	Estimator \citep{imai2021experimental}] \label{thm:PAVest}Under
	Assumptions~\ref{asm:SUTVA},~\ref{asm:comrand}, and~\ref{asm:randomsample},
	the expectation and variance of the PAV estimator defined
	equation~\eqref{eq:PAVest} are given by,
	\begin{eqnarray*}
		\E\{\hat\lambda_f(\bZ_n)\} & = & \lambda_f, \\
		\V\{\hat\lambda_f(\bZ_n)\} & = &  \frac{\E(S_{f1}^2)}{n_1} +
		\frac{\E(S_{f0}^2)}{n_0},
	\end{eqnarray*}
	where
	$S_{ft}^2 = \sum_{i=1}^n (Y_{fi}(t) -
	\overline{Y_f(t)})^2/(n-1)$ with
	$Y_{fi}(t) = \mathbf{1}\{f(\bX_i)=t\}Y_i(t)$, and
	$\overline{Y_f(t)} = \sum_{i=1}^n
	Y_{fi}(t)/n$ for $t=\{0,1\}$.
\end{theorem}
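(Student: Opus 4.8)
The plan is to recognize $\hat\lambda_f(\bZ_n)$ as an ordinary Neyman difference-in-means-type statistic applied to the transformed potential outcomes $Y_{fi}(t) = \mathbf{1}\{f(\bX_i)=t\}Y_i(t)$, and then to reuse the two-stage (randomization, then sampling) decomposition already displayed in~\eqref{eq:diffmean}--\eqref{eq:diffvar}. First I would substitute $Y_i = T_iY_i(1)+(1-T_i)Y_i(0)$ into~\eqref{eq:PAVest} and use that $f$ is binary to rewrite the estimator as $\hat\lambda_f(\bZ_n) = \frac{1}{n_1}\sum_i T_i Y_{fi}(1) + \frac{1}{n_0}\sum_i (1-T_i) Y_{fi}(0)$, a sum (rather than difference) of two sample means. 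The single structural fact that drives the whole argument is the orthogonality $Y_{fi}(1)\,Y_{fi}(0) = 0$ for every $i$, which holds because $f(\bX_i)\in\{0,1\}$ forces at most one of the two transformed outcomes to be nonzero; I would flag this identity early, since it is responsible for every cancellation below. It also gives $Y_i(f(\bX_i)) = Y_{fi}(1)+Y_{fi}(0)$.

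For unbiasedness I would condition on the sample $\{Y_i(1),Y_i(0),\bX_i\}_{i=1}^n$ (which fixes the $Y_{fi}(t)$ and $f(\bX_i)$) and use complete randomization, under which $\E(T_i\mid\text{sample}) = n_1/n$. This gives $\E(\hat\lambda_f\mid\text{sample}) = \overline{Y_f(1)} + \overline{Y_f(0)} = \frac{1}{n}\sum_i Y_i(f(\bX_i))$, the finite-sample analogue of $\lambda_f$; taking a further expectation over the i.i.d. sampling in Assumption~\ref{asm:randomsample} yields $\E\{\hat\lambda_f(\bZ_n)\} = \E[Y_i(f(\bX_i))] = \lambda_f$.

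The variance is the part requiring real work, and I would attack it through the law of total variance, $\V(\hat\lambda_f) = \E[\V(\hat\lambda_f\mid\text{sample})] + \V[\E(\hat\lambda_f\mid\text{sample})]$. The inner (randomization) variance is the main obstacle: writing $\hat\lambda_f = \text{const} + \sum_i c_i T_i$ with $c_i = Y_{fi}(1)/n_1 - Y_{fi}(0)/n_0$ and plugging in the completely-randomized moments $\V(T_i) = n_1n_0/n^2$ and $\Cov(T_i,T_j) = -n_1n_0/\{n^2(n-1)\}$, I would reduce $\V(\hat\lambda_f\mid\text{sample})$ to an expression in $\sum_i c_i^2$ and $(\sum_i c_i)^2$. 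Here the orthogonality $Y_{fi}(1)Y_{fi}(0)=0$ removes the cross term in $c_i^2$, and after substituting $S_{ft}^2 = (\sum_i Y_{fi}(t)^2 - n\,\overline{Y_f(t)}^2)/(n-1)$ the leftover $\overline{Y_f(t)}^2$ pieces cancel, leaving $\V(\hat\lambda_f\mid\text{sample}) = \frac{n_0}{n n_1}S_{f1}^2 + \frac{n_1}{n n_0}S_{f0}^2 + \frac{2}{n-1}\overline{Y_f(1)}\,\overline{Y_f(0)}$.

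Finally I would take the outer expectations over the sampling distribution. Because the units are i.i.d., $\E[S_{ft}^2] = \V(Y_{fi}(t))$, and using $Y_{fi}(1)Y_{fi}(0)=0$ again I would compute $\E[\overline{Y_f(1)}\,\overline{Y_f(0)}] = \frac{n-1}{n}\E[Y_{fi}(1)]\E[Y_{fi}(0)]$ (only the $i\neq j$ cross terms survive) while $\V(\E(\hat\lambda_f\mid\text{sample})) = \V(Y_i(f(\bX_i)))/n$, whose expansion contributes $-2\E[Y_{fi}(1)]\E[Y_{fi}(0)]/n$ through $\Cov(Y_{fi}(1),Y_{fi}(0)) = -\E[Y_{fi}(1)]\E[Y_{fi}(0)]$. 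These two mean-product terms cancel exactly, and recombining $\frac{n_0}{n n_1} + \frac{1}{n} = \frac{1}{n_1}$ (together with its control-arm analogue) delivers $\V\{\hat\lambda_f(\bZ_n)\} = \E(S_{f1}^2)/n_1 + \E(S_{f0}^2)/n_0$. I expect the bookkeeping of the cross and mean-product terms to be the only delicate point; everything else is the standard Neyman calculation transported to the transformed outcomes.
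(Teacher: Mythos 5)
Your proposal is correct and follows essentially the same route as the paper, which states that the PAV results ``follow immediately from Neyman's classic results by replacing the potential outcome $Y_i(t)$ with $\mathbf{1}\{f(\bX_i)=t\}Y_i(t)$'' and defers the details to the cited prior work. You simply make that substitution explicit, and your identification of the orthogonality $Y_{fi}(1)Y_{fi}(0)=0$ as the reason the covariance terms become identifiable and cancel in the two-stage decomposition is exactly the right bookkeeping.
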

\begin{theorem}[Unbiasedness and Variance of the PAPE Estimator
  \citep{imai2021experimental}] \label{thm:PAPEest}Under
  Assumptions~\ref{asm:SUTVA},~\ref{asm:comrand},
  and~\ref{asm:randomsample}, the expectation and variance of the PAPE
  estimator defined equation~\eqref{eq:PAPEest} are given by,
  \begin{eqnarray*}
    \E\{\hat\tau_f(\bZ_n)\} & = & \tau_f, \\
    \V\{\hat\tau_f(\bZ_n)\} & = &  \frac{n^2}{(n-1)^2}\l[\frac{\E(\widetilde{S}_{f1}^2)}{n_1} +
                                \frac{\E(\widetilde{S}_{f0}^2)}{n_0} + \frac{1}{n^2} \l\{\tau_f^2 -n p_f(1-p_f)
                                \tau^2 + 2(n-1)(2p_f-1) \tau_f\tau \r\} \r],
  \end{eqnarray*}
  where
  $\widetilde{S}_{ft}^2 = \sum_{i=1}^n (\widetilde{Y}_{fi}(t) -
  \overline{\widetilde{Y}_f(t)})^2/(n-1)$ with
  $\widetilde{Y}_{fi}(t) = (f(\bX_i) - \hat{p}_f)Y_i(t)$, and
  $\overline{\widetilde{Y}_f(t)} = \sum_{i=1}^n
  \widetilde{Y}_{fi}(t)/n$ for $t=\{0,1\}$.
\end{theorem}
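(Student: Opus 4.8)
The plan is to reduce the PAPE estimator to a standard Neyman difference-in-means applied to transformed pseudo-outcomes, and then invoke the finite-population results of Section~\ref{sec:fixed} together with the law of total variance. The crucial algebraic observation is that, writing $g_i = f(\bX_i)$ and collecting the terms of~\eqref{eq:PAPEest}, one has
\begin{equation*}
  \hat\tau_f(\bZ_n) \ = \ \frac{n}{n-1}\l[\frac{1}{n_1}\sum_{i=1}^n T_i \widetilde{Y}_{fi}(1) - \frac{1}{n_0}\sum_{i=1}^n (1-T_i)\widetilde{Y}_{fi}(0)\r],
\end{equation*}
where $\widetilde{Y}_{fi}(t) = (g_i - \hat{p}_f)Y_i(t)$ are exactly the pseudo-outcomes appearing in the theorem. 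The key point is that, conditional on the science table $\{Y_i(1),Y_i(0),\bX_i\}_{i=1}^n$, the quantity $\hat{p}_f = \sum_i g_i/n$ is fixed, so the bracketed expression is a genuine difference-in-means estimator for the finite population with potential outcomes $\widetilde{Y}_{fi}(t)$. This lets me reuse Neyman's conditional (randomization-based) results verbatim.

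For unbiasedness, I would first apply Neyman's conditional unbiasedness to obtain $\E(\hat\tau_f \mid \text{science}) = \frac{n}{n-1}\{\overline{\widetilde{Y}_f(1)} - \overline{\widetilde{Y}_f(0)}\} = \frac{n}{n-1}\cdot\frac{1}{n}\sum_i (g_i - \hat{p}_f)(Y_i(1)-Y_i(0))$, and recognize the right-hand side as the unbiased sample covariance $\frac{1}{n-1}\sum_i (g_i-\hat{p}_f)(\tau_i - \overline{\tau})$ between $g_i$ and $\tau_i = Y_i(1)-Y_i(0)$; here the factor $n/(n-1)$ is precisely the degrees-of-freedom correction that turns the plug-in covariance into its unbiased version. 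Taking the expectation over random sampling (Assumption~\ref{asm:randomsample}) then gives $\E\{\hat\tau_f(\bZ_n)\} = \Cov(f(\bX_i),Y_i(1)-Y_i(0)) = \tau_f$, using the covariance representation of the PAPE noted in Section~\ref{sec:fixed}.

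For the variance, I would use the law of total variance, $\V(\hat\tau_f) = \E\{\V(\hat\tau_f\mid\text{science})\} + \V\{\E(\hat\tau_f\mid\text{science})\}$. The conditional variance follows from Neyman's finite-population formula applied to the transformed outcomes, yielding $\frac{n^2}{(n-1)^2}\{\widetilde{S}_{f1}^2/n_1 + \widetilde{S}_{f0}^2/n_0 - S_{\widetilde{\Delta}}^2/n\}$, where $S_{\widetilde{\Delta}}^2$ is the finite-population variance of $\widetilde{\Delta}_i = (g_i-\hat{p}_f)\tau_i$. The between-sample term is $\frac{n^2}{(n-1)^2}\V(\overline{\widetilde{\Delta}})$, the sampling variance of the conditional mean. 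After taking the sampling expectation, the first two pieces produce $\frac{n^2}{(n-1)^2}\{\E(\widetilde{S}_{f1}^2)/n_1 + \E(\widetilde{S}_{f0}^2)/n_0\}$; it then remains to show that the leftover $-\frac{n^2}{(n-1)^2}\E(S_{\widetilde{\Delta}}^2)/n$ combines with the between-sample variance to produce the stated bracket $\frac{1}{n^2}\{\tau_f^2 - n p_f(1-p_f)\tau^2 + 2(n-1)(2p_f-1)\tau_f\tau\}$.

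The main obstacle is this last combination. Because the pseudo-outcomes $\widetilde{Y}_{fi}(t)$ all share the random factor $\hat{p}_f = \sum_i g_i /n$, they are not independent across $i$, so neither $\E(S_{\widetilde{\Delta}}^2)$ nor $\V(\overline{\widetilde{\Delta}})$ reduces to a simple per-unit variance. I would handle this by expanding $g_i - \hat{p}_f = (g_i - p_f) - (\hat{p}_f - p_f)$ and carefully tracking the $O(1/n)$ contributions created by $\hat{p}_f - p_f$, using $\V(g_i) = p_f(1-p_f)$ (since $g_i \in \{0,1\}$) and the independence of units to evaluate the required second- through fourth-order cross-moments of $g_i$ and $\tau_i$. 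The $p_f(1-p_f)\tau^2$ and $(2p_f-1)\tau_f\tau$ structure emerges precisely from these sampling-induced covariances between $\hat{p}_f$ and the sample covariance of $(g_i,\tau_i)$; this moment bookkeeping is the only genuinely laborious part of the argument.
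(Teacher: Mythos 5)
Your proposal is correct and follows essentially the same route as the paper (which defers the formal proof of this theorem to the cited prior work, but whose discussion of $\widetilde{Y}_{fi}(t)$ as the relevant potential outcome and whose parallel computation in the proof of Lemma~\ref{lemma:SAPEa} use exactly your decomposition, with $-\E(S^2_{\widetilde{\Delta}})/n + \V(\overline{\widetilde{\Delta}})$ collapsing to the cross-covariance $\Cov(\widetilde{\Delta}_i,\widetilde{\Delta}_j)$ for $i \ne j$). The moment bookkeeping you defer does check out: expanding $g_i-\hat p_f=(g_i-p_f)-(\hat p_f-p_f)$ and using $\E[(g_i-p_f)^2\tau_i]=(1-2p_f)\tau_f+p_f(1-p_f)\tau$ yields exactly $\frac{1}{n^2}\{\tau_f^2-np_f(1-p_f)\tau^2+2(n-1)(2p_f-1)\tau_f\tau\}$.
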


The properties of the PAV estimator shown above follow immediately
from Neyman's classic results by replacing the potential outcome
$Y_i(t)$ with the the potential outcome that incorporates the ITR,
i.e., $\mathbf{1}\{f(\bX_i)=t\}Y_i(t)$.  The form of the estimator
does not mean, however, that we are ignoring observations whose
prescribed treatment status differs from the observed one, i.e.,
$f(\bX_i) \ne T_i$.  These observations are still used when estimating
the variance.  An important implication of this subtle fact is
discussed in Section~\ref{subsec:minvar}.

We can also compare the results of the PAPE estimator with Neyman's
classic results.  We observe that the relevant potential outcome is
given by $\widetilde{Y}_{fi}(t) = (f(\bX_i)-\hat{p}_f)Y_i(t)$, which
directly compares the ITR with the randomized treatment rule.  When
compared to the PAV estimator, the variance of the PAPE estimator has
an additional term, which comes from the fact that
$\widetilde{Y}_{fi}(t)$ is correlated across observations due to the
estimation of the proportion $p_f$.

The correlation can be broadly decomposed into two components: (1) a
negative component caused by the negative correlation within the
mean-adjusted ITR $(f(\bX_i)-\hat{p}_f)$ and $(f(\bX_j)-\hat{p}_f)$,
and (2) the remaining component that arises due to the interaction
between $f(\bX_i)$ and $\tau_i$.  The negative component generally
dominates if the ITR of interest treats roughly $50\%$ of the
population ($p_f \approx 0.5$) and the average treatment effect $\tau$
is not too small. This suggests that the mean adjustment could lead to
a variance reduction when the ITR treats a roughly half of the
population. 

On the other hand, for $p_f \not\approx 0.5$ and $n \gg 1$, this
additional term is positive if and only if the following condition
holds:
\begin{equation*}
  |\tau_f| \gtrsim
  \frac{np_f(1-p_f)}{(n-1)|2p_f-1|}|\tau|.
\end{equation*}
Thus, this additional term is only likely to be positive under a
scenario where $p_f$ is away from 0.5 and the magnitude of the PAPE is
much greater than that of the ATE (i.e., the ITR is performing
well). Equation~\eqref{eq:PAPEbound} suggests that this is unlikely
unless the variance of the individualized treatment effect is large,
implying a large degree of treatment effect heterogeneity.

\subsection{Performance Comparison among Multiple ITRs}
\label{subsec:multiple}

While the PAPE compares an ITR with a random treatment assignment rule
that treats the same proportion of units, researchers are often
interested in comparing the performance of multiple ITRs.  In such
cases, we recommend estimating the difference in PAV between two ITRs
that are subject to the same budget constraint.
\citep{imai2021experimental} provides the details of estimation and
inference regarding this quantity.

The use of PAPE for comparison of two ITRs are likely to be
inappropriate.  To see this, note that the difference in PAPE can be
written as the covariance between the agreement of two ITRs and the
true ITE:
\begin{equation*}
	\tau_f-\tau_g \ = \ \Cov(f(\bX_i)-g(\bX_i), Y_i(1)-Y_i(0)).
\end{equation*}
This expression shows that for ITRs with similar treatment
proportions, the sign of this difference indicates the relative
capability of the ITRs in identifying the optimal individuals to
treat. However, if the two ITRs $f$ and $g$ have significantly
different treatment proportions, then this comparison is difficult.
Figure~\ref{fig:pape_diff} shows an example, in which ITR $f$ has a
higher PAV than ITR $g$, but $f$ has a negative PAPE while $g$ has a
positive PAPE. In this case, $f$ is not an effective ITR as it
performs significantly worse than random treatment, but practitioners
might still choose to not use $g$ as it is able to identify a small
percentage of good patients to target.

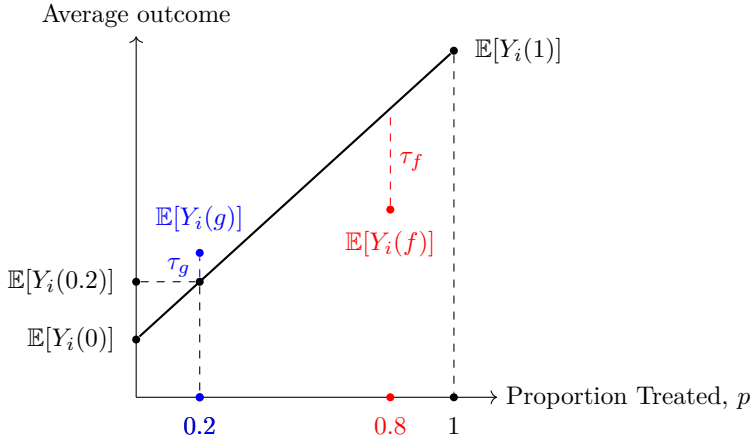
\begin{figure}[h]
	\begin{tikzpicture}[scale=1.9]
	\draw[->] (0,0) -- (2.5,0) coordinate (x axis)  node[right, black]{Proportion Treated, $p$};
	\draw[->] (0,0) -- (0,2.5) coordinate (y axis) node[above, black]{Average outcome};
	\node[circle,fill=black,inner sep=0pt,minimum size=3pt] (a) at (0,0.4) {};
	\node[left=0.1cm of a] {$\E[Y_i(0)]$};	
	\node[circle,fill=black,inner sep=0pt,minimum size=3pt] (b) at (2.2,2.4) {};
	\node[right=0.1cm of b] {$\E[Y_i(1)]$};
	\node[circle,fill=black,inner sep=0pt,minimum size=3pt] (bx) at (2.2, 0) {};	
	\node[below=0.1cm of bx] {$1$};	
	\draw[ dashed,-] (b) -- (bx);	
	\draw[black,thick,-] (a) --  (b);				
	\node[circle,fill=black,inner sep=0pt,minimum size=3pt] (cx) at (0.44, 0) {};
	\node[below=0.1cm of cx] {$0.2$};
	\node[circle,fill=black,inner sep=0pt,minimum size=3pt] (c) at (0.44, 0.8) {};	
	\node[circle,fill=black,inner sep=0pt,minimum size=3pt] (cy) at (0, 0.8) {};
	\node[left=0.1cm of cy] {$\E[Y_i(0.2)]$};
	\draw[dashed,-] (cy) -- (c) -- (cx);			
	\node (rd) at (0.9,1.15) {};
	\node[circle,fill=blue,inner sep=0pt,minimum size=3pt] (g) at (0.44, 1) {};
	\node (gy) at (0, 1) {};
	\node[above=0.1cm of g] {\color{blue} $\E[Y_i(g)]$};	
	\node[circle,fill=blue,inner sep=0pt,minimum size=3pt] (gx) at (0.44, 0) {};
	\node[below=0.1cm of gx] {\color{blue} $0.2$};
	\draw[blue, dashed,-] (c) -- (g) node [midway, left] {$\tau_g$};			
	\node[circle,fill=red,inner sep=0pt,minimum size=3pt] (f) at (1.76, 1.3) {};
	\node (fy) at (0, 1.4) {};
	\node[below=0.1cm of f] {\color{red} $\E[Y_i(f)]$};
	\node[circle,fill=red,inner sep=0pt,minimum size=3pt] (fx) at (1.76, 0) {};
	\node[below=0.1cm of fx] {\color{red} $0.8$};
	\node (fl) at (1.76, 2.0) {};
	\draw[red, dashed,-]  (fl) -- (f) node [midway, right] {$\tau_f$};	
\end{tikzpicture}
\caption{Illustration of PAPE for two different ITRs $f$ and $g$. Here the $x$ axis is the proportion of individuals treated, and $y$ axis is PAV. The PAV of $f$ is higher than PAV of $g$, but ITR $g$ has a positive PAPE $\tau_g$ and the ITR $f$ has a negative PAPE $\tau_f$. }
\label{fig:pape_diff}
\end{figure}

\subsection{Lack of invariance}
\label{subsec:minvar}

Unlike Neyman's ATE estimator, the PAV and PAPE estimators are not
invariant to a constant shift of the outcome variable.  One might
expect that adding a constant $\delta$ to $Y$ would shift the PAV
estimator by $\delta$ and not affect the PAPE estimator at all.
Unfortunately, this is not the case.  For both of these estimators, a
constant shift of the outcome will result in an \emph{additional}
change of the equal magnitude.  Let $\hat\lambda^\delta_f(\bZ)$ and
$\hat\tau^\delta_f(\bZ)$ be the new PAV and PAPE estimators under a
constant $\delta$ shift, i.e.,
$\hat\lambda^\delta_f(\bZ) = \hat\lambda_f(\bZ)+\delta$ and
$\hat\tau^\delta_f(\bZ)= \hat\tau_f(\bZ)+\delta$. Then, we have,
\begin{equation*}
  \hat\lambda^\delta_f(\bZ) -\hat\lambda_f(\bZ)  - \delta \ = 	\frac{n-1}{n}\left(\hat\tau^\delta_f(\bZ) -\hat\tau_f(\bZ)\right)
  \ = \ \delta\left(\frac{1}{n_1}\sum_{i=1}^n T_if(\bX_i) + \frac{1}{n_0}\sum_{i=1}^n (1-T_i)(1-f(\bX_i)) -1\right).
\end{equation*}

Since this term equals zero in expectation, the two estimators remain
unbiased.  However, this shift affects their variances because the ITR
$f(\bX_i)$ is only balanced on average due to the randomized treatment
assignment. Intuitively, this is because the relevant potential
outcomes in the case of both PAV and PAPE are not invariant to a
constant shift of the outcome variable. For PAV, the relevant
potential outcome is $Y_{fi}(t)=\mathbf{1}\{f(\bX_i)=t\}Y_i(t)$, so
under a constant shift of $\delta$, only samples with $f(\bX_i)=t$
would be shifted by $\delta$ while the remaining samples still has a
value of zero. The intuition is similar for PAPE. Therefore, we expect
that balancing the observed outcomes $Y_i$ close to zero could result
in increased efficiency.

We now derive the constant shift $\delta$ that minimizes the resulting
variances.  We can show that a constant shift of $\delta$ to potential
outcomes creates the following additional variance terms for both
estimators:
\begin{proposition}[Minimum Variance
  Estimators] \label{prop:invariance} The variances of the constant
  shift estimators are given by:
  \begin{align*}
    \V(\hat\lambda^\delta_f(\bZ))  & \ = \ \V(\hat\lambda_f(\bZ)) +
                                     \delta
                                     p_f(1-p_f)\left(\frac{2\kappa_{11}}{n_1}+
                                     \frac{2\kappa_{00}}{n_0}+ \delta \cdot \frac{n}{n_1n_0}\right),\\
    \V(\hat\tau^\delta_f(\bZ)) & \ = \ \V(\hat\tau_f(\bZ)) +
                                 \frac{n^2}{(n-1)^2} \delta
                                 p_f(1-p_f)\left(\frac{2\kappa_{11}}{n_1}+
                                 \frac{2\kappa_{00}}{n_0}+ \delta \cdot \frac{n}{n_1n_0}\right) + O\left(\frac{\delta}{n^2}\right),
  \end{align*}
  where $\kappa_{st}= \E[Y_i(s) \mid f(\bX_i)=t]$. Minimizing these
  variances over $\delta$ results in the following optimal value of
  $\delta^\ast_\lambda$ and $\delta^\ast_\tau$ for the PAV and PAPE,
  respectively:
\begin{equation*}
	\delta^\ast_\lambda \ = \
        -\left(\frac{n_0}{n}\kappa_{11}+\frac{n_1}{n}\kappa_{00}\right);
        \quad \delta^\ast_\tau  = \delta^\ast_\lambda + O\left(\frac{1}{n}\right).\label{eq:optimal-delta}
\end{equation*}
\end{proposition}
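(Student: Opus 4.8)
The plan is to recognize that $\hat\lambda^\delta_f$ and $\hat\tau^\delta_f$ are nothing but the estimators in \eqref{eq:PAVest}--\eqref{eq:PAPEest} applied to the shifted data $\{Y_i+\delta, T_i, \bX_i\}_{i=1}^n$, so that Theorems~\ref{thm:PAVest} and~\ref{thm:PAPEest} hold verbatim once each relevant potential outcome is replaced by its shifted version. Because adding $\delta$ to both $Y_i(1)$ and $Y_i(0)$ leaves the individual effect $Y_i(1)-Y_i(0)$ unchanged, the quantities $\tau_f=\Cov(f(\bX_i),Y_i(1)-Y_i(0))$, $\tau$, and $p_f$ are all invariant; hence every term of the two variance formulas is unchanged except the expected sample variances $\E(S_{ft}^2)$ and $\E(\widetilde S_{ft}^2)$ of the relevant potential outcomes. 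The proposition thus reduces to computing the perturbation of these two sample variances and then minimizing the resulting quadratic in $\delta$.

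For the PAV, the relevant potential outcome $Y_{fi}(t)=\mathbf{1}\{f(\bX_i)=t\}Y_i(t)$ becomes $Y_{fi}(t)+\delta\,\mathbf{1}\{f(\bX_i)=t\}$. Expanding the shifted sample variance and using $\mathbf{1}\{f(\bX_i)=t\}^2=\mathbf{1}\{f(\bX_i)=t\}$, the linear-in-$\delta$ part collapses to $(1-\hat p_f)\sum_i f(\bX_i)Y_i(1)$ for $t=1$ and to $\hat p_f\sum_i(1-f(\bX_i))Y_i(0)$ for $t=0$, while the quadratic part equals $n\hat p_f(1-\hat p_f)$ because the indicator is Bernoulli with sample mean $\hat p_f$ (for $t=1$) or $1-\hat p_f$ (for $t=0$). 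Taking expectations over the random sample then needs only two identities, both proved by separating the $i=k$ and $i\neq k$ contributions in $\hat p_f=n^{-1}\sum_k f(\bX_k)$: namely $\E[(1-\hat p_f)\sum_i f(\bX_i)Y_i(1)]=(n-1)p_f(1-p_f)\kappa_{11}$ (with its $t=0$ analogue yielding $\kappa_{00}$) and $\E[n\hat p_f(1-\hat p_f)]=(n-1)p_f(1-p_f)$. Dividing by $n_1$ and $n_0$, adding, and using $n_1^{-1}+n_0^{-1}=n/(n_1 n_0)$ gives the PAV formula exactly.

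For the PAPE, the relevant potential outcome $\widetilde Y_{fi}(t)=(f(\bX_i)-\hat p_f)Y_i(t)$ becomes $\widetilde Y_{fi}(t)+\delta(f(\bX_i)-\hat p_f)$. The crucial simplification is that the added term has sample mean exactly zero, so $\overline{\widetilde Y_f(t)}$ is unaffected and the perturbation of $\widetilde S_{ft}^2$ reduces to $\tfrac{2\delta}{n-1}\sum_i(f(\bX_i)-\hat p_f)^2 Y_i(t)+\tfrac{\delta^2}{n-1}\sum_i(f(\bX_i)-\hat p_f)^2$. Since $\sum_i(f(\bX_i)-\hat p_f)^2=n\hat p_f(1-\hat p_f)$, the same identity $\E[n\hat p_f(1-\hat p_f)]=(n-1)p_f(1-p_f)$ reproduces the $\delta^2$ term, including the prefactor $n^2/(n-1)^2$ inherited from Theorem~\ref{thm:PAPEest}, exactly as stated.

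The main obstacle is the linear-in-$\delta$ PAPE term, which is controlled by $\E[(f(\bX_i)-\hat p_f)^2 Y_i(t)]$. A direct expansion of $(f(\bX_i)-\hat p_f)^2$, keeping track of the fact that $\hat p_f$ itself contains unit $i$, yields a leading coefficient of the form $p_f(1-p_f)\{(1-p_f)\kappa_{t1}+p_f\kappa_{t0}\}$ together with $O(1/n)$ corrections, i.e.\ a \emph{mixture} of the two conditional means $\kappa_{t1},\kappa_{t0}$ rather than the single diagonal $\kappa_{tt}$ that appears in the PAV expression. The delicate part of the argument is therefore the careful accounting of these off-diagonal conditional contributions and of the accompanying $O(1/n)$ remainders, so as to collect the linear PAPE term into $n^2/(n-1)^2$ times its PAV counterpart up to the residual $O(\delta/n^2)$; once this is established, $\delta^\ast_\tau$ follows from the same first-order minimization that gives $\delta^\ast_\lambda=-(\tfrac{n_0}{n}\kappa_{11}+\tfrac{n_1}{n}\kappa_{00})$, and verifying that the two minimizers coincide to the claimed order is where essentially all of the effort is concentrated.
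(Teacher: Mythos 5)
Your treatment of the PAV is complete, correct, and genuinely different in route from the paper's: the paper writes $\hat\lambda^\delta_f(\bZ)=\hat\lambda_f(\bZ)+\delta W$ with $W=\frac{1}{n_1}\sum_i T_if(\bX_i)+\frac{1}{n_0}\sum_i(1-T_i)(1-f(\bX_i))$ and computes $2\delta\Cov(\hat\lambda_f(\bZ),W)+\delta^2\V(W)$ directly, whereas you re-apply Theorem~\ref{thm:PAVest} to the shifted data and perturb the sample variances $S^2_{ft}$; your two identities $\E[(1-\hat p_f)\sum_i f(\bX_i)Y_i(1)]=(n-1)p_f(1-p_f)\kappa_{11}$ and $\E[n\hat p_f(1-\hat p_f)]=(n-1)p_f(1-p_f)$ are exactly right and reproduce the stated display with no remainder.

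The PAPE half, however, contains a genuine gap, and it sits exactly where you park ``essentially all of the effort.'' You correctly reduce the linear-in-$\delta$ term to $\E[(f(\bX_i)-\hat p_f)^2Y_i(t)]=p_f(1-p_f)\{(1-p_f)\kappa_{t1}+p_f\kappa_{t0}\}+O(1/n)$, and you then assert that the remaining work is to collapse the mixture $(1-p_f)\kappa_{t1}+p_f\kappa_{t0}$ to the diagonal $\kappa_{tt}$ up to an $O(\delta/n^2)$ residual. That step cannot be carried out: whenever $\kappa_{t1}\neq\kappa_{t0}$ the two coefficients differ by an $O(1)$ amount, so after division by $n_t$ the discrepancy enters at order $\delta/n$ --- the same order as the terms the proposition retains, not the order of its remainder. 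Your mixture coefficient is in fact confirmed by redoing the paper's own decomposition: writing $\hat\tau^\delta_f=\hat\tau_f+\frac{n}{n-1}\delta(W-1)$, the needed cross term $\E[(f(\bX_i)-\hat p_f)\,\hat p_f\,(Y_i(1)-\overline{Y(1)})]$ equals $\frac{n-1}{n}\,p_f^2(\kappa_{11}-\E[Y_i(1)])+O(1/n)$, an $O(1)$ quantity, whereas Appendix~\ref{app:invariance} evaluates it as $O(1/n)$ (apparently dropping a factor of $n$), which is precisely what makes the linear PAPE term appear to equal $n^2/(n-1)^2$ times the PAV term. So the obstacle you identify is not a deferred computation but an inconsistency with the stated display; an honest completion of your argument yields the mixture coefficients $(1-p_f)\kappa_{11}+p_f\kappa_{10}$ and $p_f\kappa_{00}+(1-p_f)\kappa_{01}$ in place of $\kappa_{11}$ and $\kappa_{00}$, and hence a $\delta^\ast_\tau$ that differs from $\delta^\ast_\lambda$ at order one rather than $O(1/n)$. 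You should either exhibit an error in your own expansion (I cannot find one) or state and prove the corrected version.
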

Proof is in Appendix~\ref{app:invariance}.  The proposition implies
that if we wish to minimize variance across a range of $f(\bX_i)$,
then when $n_1=n_0=n/2$, the optimal value of $\delta$ approximately
balances the two potential outcomes around zero after a constant
shift, i.e.,
$$\frac{1}{n}\sum_{i=1}^n \{(Y_i(1) + \delta^\ast_\lambda) + (Y_i(0) +
\delta^\ast_\lambda)\} \approx 0.$$

\section{Ex-ante vs. ex-post experimental evaluations}
\label{subsec:expostexantedisc}

So far, we have considered an \textit{ex-post} evaluation, in which we
first conduct a completely randomized experiment and then evaluate
ITRs using the data from the experiment.  Alternatively, researchers
may consider an {\it ex-ante} experimental evaluation, in which we
randomly assign units to an ITR.  That is, the ITR itself is the
``treatment'' of this experiment.  Ex-ante experimental designs are
commonly used in practice \citep[see][for
example]{kumar2020orderrex,forman2019randomized}.

We apply the Neyman's repeated sampling framework to compare the
statistical efficiency of \textit{ex-ante} and {\it ex-post}
experimental evaluations.  We show below that, perhaps surprisingly,
in some cases \textit{ex-post} evaluation is more efficient than
\textit{ex-ante} evaluation.  Our result suggests that given a
potential ethical concern of {\it ex-ante} experimental evaluation,
researchers may prefer {\it ex-post} evaluation.  Another reason to
prefer {\it ex-post} evaluation is that this design allows one to
evaluate any number of ITRs while the {\it ex-ante} evaluation is tied
to a particular ITR. In this section, our analysis focuses on the
PAPE.  Since the PAV does not compare between two different treatment
regimes, it does not make sense to design a randomized trial around
it.

\subsection{Setup}
\label{subsec:comparison_setup}

For the {\it ex-ante} evaluation of the PAPE, we assume a simple
random of $n$ units from the same target population, $\mathcal{P}$.
Consider a completely randomized experiment, in which a total of $n_f$
units are randomly assigned to an ITR $f$ while the remaining units
$n_r = n-n_f$ are assigned to the random treatment rule with the
probability of treatment assignment equal to $n_{r1}/n_r$.  Let $F_i$
be an indicator variable, which is equal to 1 if unit $i$ is assigned
to the ITR $f$ and is equal to 0 otherwise.  Under the random
treatment rule, the number of units that are randomly assigned to the
treatment condition is $n_{r1}$ while $n_{r0} = n_r - n_{r1}$ units
are assigned to the control condition.  As before, we use $T_i$ to
represent the treatment indicator.  We formally state these
assumptions.
\begin{assumption}[Complete Randomization in the Ex-ante Evaluation of PAPE] \label{asm:exante_random}
The probability of being
	assigned to the individualized treatment rule rather than the random
	treatment rule is given by,
	$$\Pr(\bF = \bm{f}\mid \{Y_i(1), Y_i(0), \bX_i\}_{i=1}^n ) \ = \ \frac{1}{ { n \choose n_f}},$$
	for each $\bm{f}$ where $\sum_{i=1}^n f_i = n_f$.  Among those
        who are assigned to the random treatment rule, i.e.,
        $F_i = 0$, the probability of treatment assignment is given
        by,
	$$\Pr(\bT = \bt \mid \{Y_i(1), Y_i(0), \bX_i\}_{i=1}^n) \ = \
	\frac{1}{ { n_r\choose n_{r1}}},$$
	for each $\bt$ where $\sum_{i=1}^n (1-F_i)t_i = n_{r1}$.
\end{assumption}

Using this experimental data, we wish to estimate the PAPE defined in
Equation~\eqref{eq:PAPE}.  For simplicity, we have the number of
treated units under the random treatment rule to equal that under the
ITR condition, i.e., $\hat{p}_f = n_{r1} / n_r$ where
$\hat{p}_f=\sum_{i=1}^n f(\bX_i)/n$, so that the evaluation estimator
would not need to be further adjusted. Fortunately, in practice, this
can be easily accomplished so long as the covariates are available
prior to the randomization of treatment assignment among the group
assigned to the random treatment rule.  Lastly, the so-called Neyman
allocation implies that if the variances of $Y_i(1)$ and $Y_i(0)$
differ significantly from one another, one can gain additional
statistical efficiency by allocating more units to the treatment
condition whose potential outcome has a greater variance.  This
optimal design, however, does require the availability of external
data to estimate these variances.  For the sake of simplicity, we do
not consider such optimal designs here.

We consider the following estimator of the PAPE for the
\textit{ex-ante} experimental evaluation that accounts for a potential
difference in the proportion of treated units between the ITR and the
random treatment rule by appropriately weighting the latter,
\begin{align}
	\hat\tau_f^\ast(\bZ_n) \ &= \ \frac{n}{n-1} \l(\frac{1}{n_f}  \sum_{i=1}^n
	F_i Y_i - \frac{\hat{p}_f}{n_{r1}} \sum_{i=1}^n (1-F_i)T_i Y_i -
	\frac{1-\hat{p}_f}{n_{r0}} \sum_{i=1}^n (1-F_i)(1-T_i) Y_i  \r). \label{eq:PAPEaest}
\end{align}

The \textit{ex-ante} evaluation differs from the \textit{ex-post}
evaluation in two ways. First, the \textit{ex-ante} estimator requires
two separate random assignments ($T_i$ and $F_i$) while the
\textit{ex-post} estimator only involves one.  Intuitively, an
additional layer of randomization increases variance. Second, the
\textit{ex-ante} evaluation requires a separate group that follows an
ITR, whereas all individuals under the \textit{ex-post} evaluation are
simply randomly assigned either to the treatment or control group.  As
a result, under the \textit{ex-post} evaluation, we utilize the
samples identically, which could further reduce the variance.
Together, we expect the \textit{ex-ante} evaluation to be less
efficient than the \textit{ex-post} evaluation as the full sample is
not utilized for every part of the estimation.  Below, we use Neyman's
repeated sampling framework to confirm this intuition under a set of
simplifying assumptions.

\subsection{Comparison of the two experimental designs}
\label{subsec:comparison}

Before comparing two modes of evaluation, we derive the bias and
variance of the \textit{ex-ante} evaluation estimators under the
Neyman's repeated sampling framework.  In the current case, the
uncertainty comes from three types of randomness: (1) the random
assignment to the individualized or random treatment rule, (2) the
randomized treatment assignment under the random assignment rule, and
(3) the simple random sampling of units from the target population.
The next theorem shows that this estimator is unbiased and the
variance is identifiable.  Proof is given in
Appendix~\ref{app:PAPEaest}.
\begin{theorem}[Unbiasedness and Variance of the \textit{Ex-ante}  PAPE
	Estimator] \label{thm:PAPEaest}  Under
	Assumptions~\ref{asm:SUTVA},~\ref{asm:randomsample},~and~\ref{asm:exante_random},
	the expectation and variance of the {\it ex-ante} PAPE estimator defined in
	Equation~\eqref{eq:PAPEaest} are given by,
	\begin{eqnarray*}
		\E(\hat\tau_f^\ast(\bZ_n)) & = & \tau_f, \\
		\V(\hat\tau_f^\ast(\bZ_n)) & = &
		\frac{n^2}{(n-1)^2}\l[\E\l\{\frac{S_f^2}{n_f}+\frac{\hat{p}_f^2S_{1}^2}{n_{r1}}+\frac{(1-\hat{p}_f)^2S_{0}^2}{n_{r0}}\r\}\r.\\
		& & \hspace{1in} \l.
		+\frac{1}{n^2} \l\{\tau_f^2 -n p_f(1-p_f)
		\tau^2 + 2(n-1)(2p_f-1) \tau_f\tau \r\}\r],
	\end{eqnarray*}
	where
	$S_f^2= \sum_{i=1}^n (Y_i(f(\bX_i))- \overline{Y(f(\bX))})^2/(n-1)$,
	and $S_{t}^2 = \sum_{i=1}^n (Y_i(t) -\overline{Y(t)})^2/(n-1)$ with
	$\overline{Y(f(\bX))}=\sum_{i=1}^n Y_i(f(\bX_i))/n$ and
	$\overline{Y(t)}=\sum_{i=1}^n Y_i(t)/n$ for $t=0,1$.
\end{theorem}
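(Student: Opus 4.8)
The plan is to apply the law of total variance, conditioning on the finite sample $\cD_n:=\{Y_i(1),Y_i(0),\bX_i\}_{i=1}^n$, and to exploit the key observation that the two-stage assignment---first $\bF$, then $\bT$ within the random-rule group---is equivalent to a single three-arm completely randomized experiment. Under Assumption~\ref{asm:exante_random}, the probability of any ordered partition of the $n$ units into an ITR arm of size $n_f$, a random-treatment arm of size $n_{r1}$, and a random-control arm of size $n_{r0}$ equals $\frac{1}{\binom{n}{n_f}}\cdot\frac{1}{\binom{n_r}{n_{r1}}}=\frac{n_f!\,n_{r1}!\,n_{r0}!}{n!}$, which is uniform; hence the design is a three-arm CRE. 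Writing $a_i:=Y_i(f(\bX_i))$ and $\hat\tau_f^\ast=\frac{n}{n-1}(A-B-C)$ for the three weighted arm sums $A,B,C$ in \eqref{eq:PAPEaest}, these sums are arm means of the arm-specific values $a_i$, $Y_i(1)$, and $Y_i(0)$, since $F_iY_i=\mathbf{1}\{i\in\cS_f\}a_i$, $(1-F_i)T_iY_i=\mathbf{1}\{i\in\cS_{r1}\}Y_i(1)$, and $(1-F_i)(1-T_i)Y_i=\mathbf{1}\{i\in\cS_{r0}\}Y_i(0)$. By the design in Section~\ref{subsec:comparison_setup}, $n_{r1}=n_r\hat p_f$ and $n_{r0}=n_r(1-\hat p_f)$, so $\hat p_f$ and all three arm sizes are fixed given $\cD_n$.

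For unbiasedness, since each unit lies in a given arm with probability equal to that arm's size divided by $n$, I would first obtain $\E(\hat\tau_f^\ast\mid\cD_n)=\frac{n}{n-1}\,\widehat\tau_f^{\,\mathrm{S}}$, where $\widehat\tau_f^{\,\mathrm{S}}:=\overline{a}-\hat p_f\overline{Y(1)}-(1-\hat p_f)\overline{Y(0)}$ is the sample analogue of the PAPE; this is the \emph{identical} function of $\cD_n$ as the conditional mean of the ex-post estimator in Theorem~\ref{thm:PAPEest}. Taking the expectation over the random sampling of Assumption~\ref{asm:randomsample} and using $\E[f(\bX_i)Y_i(1)]+\E[(1-f(\bX_i))Y_i(0)]=\lambda_f$ to handle the diagonal ($i=j$) contributions of the products $\hat p_f\overline{Y(t)}$, I obtain $\E[\widehat\tau_f^{\,\mathrm{S}}]=\frac{n-1}{n}\tau_f$, so that the factor $n/(n-1)$ gives $\E(\hat\tau_f^\ast)=\tau_f$.

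For the variance, I would use $\V(\hat\tau_f^\ast)=\frac{n^2}{(n-1)^2}\l(\E[\V(A-B-C\mid\cD_n)]+\V[\widehat\tau_f^{\,\mathrm{S}}]\r)$. The conditional variance follows from the standard three-arm CRE indicator (co)variances for same- and different-arm pairs with $i=j$ and $i\neq j$. Collecting terms, the squared-coefficient diagonal pieces produce the arm-variance terms $\frac{S_f^2}{n_f}+\frac{\hat p_f^2S_1^2}{n_{r1}}+\frac{(1-\hat p_f)^2S_0^2}{n_{r0}}$, each accompanied by a finite-population correction of $-\frac1n$ times the corresponding arm variance, while the cross-arm covariances contribute the mixed sample (co)variances; I expect these to combine exactly into
\begin{equation*}
\V(A-B-C\mid\cD_n)=\frac{S_f^2}{n_f}+\frac{\hat p_f^2S_1^2}{n_{r1}}+\frac{(1-\hat p_f)^2S_0^2}{n_{r0}}-\frac1n S_U^2,
\end{equation*}
where $U_i:=a_i-\hat p_f Y_i(1)-(1-\hat p_f)Y_i(0)=(f(\bX_i)-\hat p_f)\{Y_i(1)-Y_i(0)\}=\widetilde Y_{fi}(1)-\widetilde Y_{fi}(0)$, so that $S_U^2$ is precisely the sample variance of the individual-level contrast in Theorem~\ref{thm:PAPEest}.

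The main obstacle is the between-sample term: I must show
\begin{equation*}
\V[\widehat\tau_f^{\,\mathrm{S}}]-\frac1n\E[S_U^2]=\frac{1}{n^2}\l\{\tau_f^2-np_f(1-p_f)\tau^2+2(n-1)(2p_f-1)\tau_f\tau\r\}.
\end{equation*}
This is an exact finite-sample second-moment computation for $\widehat\tau_f^{\,\mathrm{S}}=\overline{a}-\overline{Y(0)}-\hat p_f(\overline{Y(1)}-\overline{Y(0)})$, a degree-two polynomial in the i.i.d.\ sample averages; were $\hat p_f$ non-random the left-hand side would vanish, so the entire right-hand side arises from the randomness of $\hat p_f$ and produces the $\tau_f\tau$ and $\tau^2$ interaction terms. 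Because $\widehat\tau_f^{\,\mathrm{S}}$ and $U_i$ coincide exactly with the corresponding objects in the ex-post analysis, this identity is the same one established in the proof of Theorem~\ref{thm:PAPEest} and may be invoked directly. Substituting it back and taking expectations over $\cD_n$ yields the claimed variance; the only change from the ex-post case is that the within-randomization term $\E\{\widetilde S_{f1}^2/n_1+\widetilde S_{f0}^2/n_0\}$ is replaced by $\E\{S_f^2/n_f+\hat p_f^2S_1^2/n_{r1}+(1-\hat p_f)^2S_0^2/n_{r0}\}$, which is exactly the quantity that drives the efficiency comparison.
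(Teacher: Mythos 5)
Your proposal is correct and reaches the stated result, but it organizes the calculation along a genuinely different decomposition from the paper's. The paper introduces an intermediate estimator $\tilde\tau_f^{*}(\bZ_n)=\frac{1}{n_f}\sum_i Y_i(f(\bX_i))F_i-\frac{1}{n_r}\sum_i Y_i(T_i)(1-F_i)$ and peels off the randomness in three sequential layers: first the $T$-randomization within the random-rule group (Neyman's two-arm formula), then the $F$-randomization (a second two-arm computation after centering $F_i$ via $D_i=F_i-n_f/n$), and finally the random sampling. You instead note that the product of the two uniform assignment probabilities in Assumption~\ref{asm:exante_random} makes the composite design a single three-arm completely randomized experiment, so the whole design variance given the sample follows in one step from the general multi-arm Neyman formula, giving $\V(A-B-C\mid\cD_n)=S_f^2/n_f+\hat p_f^2S_1^2/n_{r1}+(1-\hat p_f)^2S_0^2/n_{r0}-S_U^2/n$ with $U_i=(f(\bX_i)-\hat p_f)\{Y_i(1)-Y_i(0)\}$. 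What your route buys is transparency about why the bracketed term $\frac{1}{n^2}\{\tau_f^2-np_f(1-p_f)\tau^2+2(n-1)(2p_f-1)\tau_f\tau\}$ is literally identical to the one in Theorem~\ref{thm:PAPEest}: the conditional mean given the sample, $\frac{n}{n-1}\widehat\tau_f^{\,\mathrm{S}}$, and the finite-population correction $-S_U^2/n$ coincide exactly with their ex-post counterparts, so the two designs differ only in the leading design-variance term --- which is precisely the comparison exploited in Section~\ref{subsec:comparison}. Your appeal to the Theorem~\ref{thm:PAPEest} identity for $\V(\widehat\tau_f^{\,\mathrm{S}})-\E(S_U^2)/n$ is legitimate; the paper's own proof likewise imports that second-moment computation from the earlier published work rather than rederiving it.
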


Given these results, we examine the relative statistical efficiency of
the \textit{ex-post} and \textit{ex-ante} experimental evaluations.
To facilitate the comparison, we assume $n_1 = n_0 = n_f = n_r = n/2$.
In words, the \textit{ex-post} evaluation sets the treatment
assignment probability to $1/2$, and the \textit{ex-ante} evaluation
also sets the probability of being assigned to the ITR to $1/2$.  In
the same fashion, we also assume $n_{r1} = n_{r0} = n/4$, implying
that the \textit{ex-ante} evaluation sets the treatment assignment
probability under the random treatment rule to $1/2$ as well. Although
our result below may not be applicable beyond this simplified setting,
we believe that this equal allocation setting is a common choice in
practice and therefore is worthy of investigation.

Under this simplified setting, the difference in the variance of the
PAPE estimator between the \textit{ex-ante} and \textit{ex-post}
evaluations is given by,
\begin{eqnarray}
	& & \V(\hat{\tau}_f^\ast(\bZ_n))-\V(\hat{\tau}_f(\bZ_n)) \nonumber\\
	& = & \frac{2n}{(n-1)^2}\l[\E\l\{p_f^2S_1^2+(1-p_f)^2S_0^2\r\}
	+2\Cov(f(\bX_i)Y_i(1),(1-f(\bX_i))Y_i(0)) \r. \nonumber\\
	& & + \l. 2 p_f\Cov(f(\bX_i)Y_i(1),Y_i(1)) +
	2(1-p_f)\Cov((1-f(\bX_i))Y_i(0),Y_i(0))\r] \nonumber\\
	& = & \frac{2n}{(n-1)^2}\l[p_f^2\V(Y_i(1))+(1-p_f)^2\V(Y_i(0))
	-2p_f(1-p_f)\E(Y_i(0) \mid f(\bX_i)=0)\E(Y_i(1)\mid f(\bX_i)=1) \r. \nonumber\\
	& & \hspace{.5in} +  2 p_f^2 \l\{\E(Y_i^2(1) \mid f(\bX_i) = 1) -
	\E(Y_i(1))\E(Y_i(1) \mid f(\bX_i) =1)\r\} \nonumber \\
	& & \l. \hspace{.5in} +  2(1-p_f)^2 \l\{\E(Y_i^2(0) \mid f(\bX_i) = 0) -
	\E(Y_i(0))\E(Y_i(0) \mid f(\bX_i) =0)\r\} \r]. \label{eq:var_diff_pape}
\end{eqnarray}
The details of the derivation are given in
Appendix~\ref{app:var_diff_pape}. Suppose now that the ITR correctly
assigns individuals on average, i.e.,
$\E(Y_i(t) \mid f(\bX_i) = t) \ge \E(Y_i(t) \mid f(\bX_i) = 1-t)$ for
$t=0,1$.  Under this assumption, the last two terms in the square
bracket are positive, i.e.,
\begin{equation*}
	\E(Y_i^2(t) \mid f(\bX_i) = t) - \E(Y_i(t))\E(Y_i(t) \mid f(\bX_i) =
	t) \ \ge \ \V(Y_i(t) \mid f(\bX_i) = t),
\end{equation*}
for $t=0,1$.  Hence, the only term that is possibly negative in
Equation~\eqref{eq:var_diff_pape} is the third term in the square
bracket. For simplicity, further assume that we shift the outcomes to minimize
variance of the \textit{ex-post} estimator and achieved
$\E(Y_i(1) + Y_i(0) \mid f(\bX_i)=1)=\E(Y_i(1) + Y_i(0)\mid
f(\bX_i)=0)=0$ (see Equation~\eqref{eq:optimal-delta}).  This
guarantees that the optimal choice of $\delta$ is zero and hence no
adjustment in variance is necessary.  Under this assumption, we can
bound Equation~\eqref{eq:var_diff_pape} from below as follows (see
Appendix~\ref{app:comp_simple} for details),
\begin{eqnarray*}
	& & \V(\hat{\tau}_f^\ast(\bZ_n))-\V(\hat{\tau}_f(\bZ_n))  \\
	& = & \frac{2n}{(n-1)^2}\l[p_f^2\V(Y_i(1))+(1-p_f)^2\V(Y_i(0))
	+2p_f^2\V(Y_i(1) \mid f(\bX_i)=1) + 2(1-p_f)^2\V(Y_i(0) \mid f(\bX_i)=0) \r.\\
	& & \l. +2p_f(1-p_f)\l[(1-p_f)\{\E(Y_i(0)\mid f(\bX_i)=0)\}^2+p_f\{\E(Y_i(1)\mid f(\bX_i)=1)\}^2\r] \r] \\
	& \geq & 0.
\end{eqnarray*}
The result implies that under a set of simplifying assumptions the
\textit{ex-post} evaluation is more efficient than the
\textit{ex-ante} evaluation.  We note, however, that this conclusion
may not hold if the \textit{ex-ante} and \textit{ex-post} setups have
sample allocation different from the setting considered here.

\section{Incorporating the uncertainty of machine learning training}
\label{sec:cross-validation}

In the above sections, we have assume that the ITR to be evaluated is
given.  For example, an ITR may be derived using an external data set.
But, in many cases, researchers may wish to use the same experimental
data set to both derive an ITR and evaluate it.  One possibility is to
randomly split a data set into the training and evaluation data sets,
and then use the former to learn an ITR and the latter for its
evaluation.  Unfortunately, this {\it sample splitting} approach does
not utilize the data most efficiently.

An alternative and more efficient approach is {\it cross-fitting}.
The idea is to randomly split the data into $K$ folds of equal size
and then use each fold as the evaluation data while using the
remaining $K-1$ folds as the training data to learn an ITR.  By
repeating this process across $K$ folds and averaging the evaluation
results, we are able to use the entire data set for both training and
evaluation.

While the dominant ``double machine learning'' (DML) approach uses the
same cross-fitting procedure \citep{chernozhukov2018double}, we show
here that Neyman's repeated sampling framework can also incorporate
this cross-fitting approach.  Unlike the DML, Neyman's framework
enables us to derive the finite-sample properties of ITR evaluation
solely based on the random splitting of the data as well as
randomization of treatment assignment and random sampling of units.

\subsection{Setup}

Consider a generic ML algorithm, which we define as
a deterministic function mapping the space of training data of finite
size, denoted by $\cZ$, to the space of all possible scoring rules
$\cS$,
\begin{equation}
  F: \cZ \to \cS.
\end{equation}
Typically, the scoring rule of interest is the estimated CATE such
that the largest value indicates the highest treatment prioritization.
Alternatively, the scoring rule may be based on the estimated baseline
risk, i.e., $\E(Y_i(0) \mid \bX_i = \bx)$.  We do not, however, assume
that the ML algorithm used to generate the scoring rule accurately
estimates either the CATE or baseline risk.  Indeed, we essentially
impose no assumption on how the scoring rule is created.  Once the
scoring rule is estimated by an ML algorithm, the ITR is given by,
\begin{equation}
  \hat{f}_{\bZ_n}(\bx):=\mathbf{1}\{F(\bZ_n)(\bx)>0\}, \label{eq:f.hat}
\end{equation}
where the notation makes it explicit that the ITR depends on the
specific training data $\bZ_n \in \cZ$ of sample size $n$.

Next, consider the following standard cross-fitting procedure.  First,
we randomly split the experimental data of size $n$ into $K$
subsamples of equal size $m = n/K$ where, for notational simplicity,
we assume $n$ is a multiple of $K$.  Then, for each $k=1,2,\ldots,K$,
we use the $k$th subsample as an evaluation dataset
$\bZ_m^{(k)}=\{\bX_i^{(k)}, T_i^{(k)}, Y_i^{(k)}\}_{i=1}^{m}$ while
the remaining $(K-1)$ subsamples are used as the training dataset
$\bZ_{n-m}^{(-k)}=\{\bX_i^{(-k)}, T_i^{(-k)},
Y_i^{(-k)}\}_{i=1}^{n-m}$.  Without loss of generality, we assume that
the number of treated (control) units is identical across $K$ folds
and denote it using $m_1$ ($m_0=m-m_1$).

Then, for each fold $k$, we estimate an ITR by applying the ML
algorithm $F$ to the training data $\bZ_{n-m}^{(-k)}$, which we denote
by $\hat{f}^{(-k)}=\hat{f}_{\bZ_{n-m}^{(-k)}}$.  We then evaluate the
performance of the ML algorithm $F$ by computing an evaluation metric
of interest based on the test data $\bZ_m^{(k)}$.  Repeating this
process $K$ times for each $k$ and averaging the results gives a
cross-fitting estimator of the evaluation metric.  Here, we focus on
the cross-fitting PAV estimator,
\begin{equation}
  \hat{\lambda}_{K}^F(\bZ_n) \ = \ \frac{1}{K}\sum_{k=1}^K \hat{\lambda}_{\hat{f}^{(-k)}}(\bZ_m^{(k)}), \label{eq:PAVcvest}
\end{equation}
where $\hat\lambda_f(\cdot)$ is defined in Equation~\eqref{eq:PAVest}.
We now discuss the estimand, for which this cross-fitting estimator is
unbiased.

\subsection{Evaluation metrics under cross-fitting}

To extend Neyman's repeated sampling framework to cross-fitting with
$K \ge 2$ folds, we begin by noting that the ITR in this setting
varies as a function of training data.  Thus, we consider the
performance measure that averages over the random sampling of training
data as well as the randomization of treatment assignment and random
sampling of units.  In other words, we evaluate the average
performance of ITR that is generated by the application of ML
algorithm $F$ across $K$ different (but overlapping) training data
sets.  This contrasts with the performance evaluation metric of a
fixed ITR discussed in earlier sections.

For the PAV under cross-fitting, we consider an average ITR over
across training data of size $n-m$,
\begin{equation*}
  \bar{f}^{F}_{n-m}(\bX_i) \ = \ \E_{\bZ_{n-m}}\{\hat{f}_{\bZ_{n-m}}(\bX_i) \mid \bX_i \} \ = \ \mathbb{P}_{\bZ_{n-m}} \{\hat{f}_{\bZ_{n-m}}(\bX_i) = 1\mid \bX_i \},
\end{equation*}
which represents the proportion of times the estimated ITR would
assign the treatment to a unit with a specific value of covariates.
The notation makes explicit the dependence on the size of training
data $n-m$ as well as ML algorithm $F$.
      
Under Neyman's repeated sampling framework, one can view each
estimated ITR as another random sampling from a population of ITRs
based on ML algorithm $F$ with training data set of size $n-m$.  Thus,
the PAV under cross-fitting can be defined as,
\begin{equation*}
  \lambda^{F}_{n-m} \ := \ \E\{\bar{f}_{n-m}^{F}(\bX_i) Y_i(1) + (1-\bar{f}_{n-m}^{F}(\bX_i))Y_i(0) \}. \label{eq:PAVcv}
\end{equation*}
For the PAPE, we consider the cross-fitting version of the proportion
treated by ITR $p_f$ as follows:
 \begin{equation*}
 	p^F_{n-m} \ := \ \mathbb{P}_{\bZ_{n-m}} \{\hat{f}_{\bZ_{n-m}}(\bX_i) = 1\}.
 \end{equation*}
Then, the  PAPE under cross-fitting can be defined as: 
\begin{equation}
  \tau^{F}_{n-m} \ := \ \E\{\bar{f}_{n-m}^{F}(\bX_i) Y_i(1) + (1-\bar{f}_{n-m}^{F}(\bX_i))Y_i(0) - p_{n-m}^F Y_i(1) - (1-p_{n-m}^F) Y_i(0)\}. \label{eq:PAPEcv}
\end{equation}
As shown before, the PAPE is equal to the covariance between the
average proportion treated and the individual treatment effect, 
\begin{equation*}
	\tau^{F}_{n-m} \ = \ \Cov(\bar{f}^{F}_{n-m}(\bX_i), Y_i(1)-Y_i(0)).
\end{equation*}

\subsection{Finite sample properties}

We now apply Neyman's repeated sampling framework to the cross-fitting
PAV estimator given in Equation~\eqref{eq:PAVcvest}.  It is easy to
show that $\hat\lambda_K^F$ is an unbiased estimator of
$\lambda^F_{n-m}$.  To derive the variance, we first note that the
evaluation metric is correlated across $K$ folds because cross-fitting
utilizes each subsample for both training and testing,
\begin{equation*}
  \V(\hat{\lambda}_K^F(\bZ_n)) \ = \
  \frac{\V(\hat{\lambda}_{\hat{f}^{(-k)}}(\bZ_m^{(k)}))}{K} + \frac{K-1}{K}
  \Cov( \hat{\lambda}_{\hat{f}^{(-k)}}(\bZ_m^{(k)}),  \hat{\lambda}_{\hat{f}^{(-\ell)}}(\bZ_m^{(\ell)})),
\end{equation*}
where $k \ne \ell$.  We then use a useful lemma about cross-fitting
due to \cite{nadeau2003inference}, and rewrite the covariance term as
follows, 
\begin{equation*}
\Cov( \hat{\lambda}_{\hat{f}^{(-k)}}(\bZ_m^{(k)}),
\hat{\lambda}_{\hat{f}^{(-\ell)}}(\bZ_m^{(\ell)})) \ = \
\V(\hat{\lambda}_{\hat{f}^{(-k)}}(\bZ_m^{(k)})) -  \E(S_F^2), 
\end{equation*}
where $S^2_F$ is the sample variance of
$\hat{\lambda}_{\hat{f}^{(-k)}}(\bZ_m^{(k)})$ across $K$ folds.
Putting them together, we have,
\begin{equation}
  \V(\hat{\lambda}_K^F(\bZ_n)) \ = \
  \V(\hat{\lambda}_{\hat{f}^{(-k)}}(\bZ_m^{(k)})) -\frac{K-1}{K}
  \E(S_F^2). \label{eq:lambdaFvar}
\end{equation}  

We can further analyze the first term of
Equation~\eqref{eq:lambdaFvar} by following the analytical strategy
used in Theorem~\ref{thm:PAVest}.  The only difference is that the
estimated ITR is correlated across observations due to training
process,
\begin{equation}
  \V(\hat{\lambda}_{\hat{f}^{(-k)}}(\bZ_m^{(k)})) \ = \
  \frac{\E(S_{\hat{f}1}^2)}{m_1} + \frac{\E(S_{\hat{f}0}^2)}{m_0} +
  \Cov(Y_{\hat{f}i}(1)-Y_{\hat{f}i}(0), Y_{\hat{f}j}(1) - Y_{\hat{f}j}(0)),    
\end{equation}
where $i \ne j$,
$Y_{\hat{f}i}(t) = \mathbf{1}\{\hat{f}^{(-k)}(\bX_i)=t\}Y_i(t)$, and
$S^2_{\hat{f}t}$ is the sample variance of $Y_{\hat{f}i}(t)$.  Further
simplifying the covariance term yields the following theorem whose proof
is given in Appendix~A.5.1. of our previously published work
\citep{imai2021experimental}.
\begin{theorem} {\sc (Unbiasedness and Exact Variance of the
    Cross-Fitting PAV Estimator
    \cite{imai2021experimental})} \label{thm:PAVcvest} Under
  Assumptions~\ref{asm:SUTVA}--\ref{asm:randomsample} the expectation
  and variance of the cross-fitting PAV estimator defined in
  equation~\eqref{eq:PAVcvest} are given by,
	\begin{eqnarray*}
		\E(\hat{\lambda}_K^F(\bZ_n))  & = & \lambda_{n-m}^F, \\
		\V(\hat{\lambda}_K^F(\bZ_n))
		& = & \frac{\E(S_{\hat{f}1}^2)}{m_1} + \frac{\E(S_{\hat{f}0}^2)}{m_0} +
\E\Big\{\Cov(\hat{f}^{(-k)}(\bX_i), \hat{f}^{(-k)}(\bX_j)
		\mid \bX_i, \bX_j) \tau_i\tau_j \Big\}   -  \frac{K-1}{K}\E(S_{F}^2),
	\end{eqnarray*}
	for $i\neq j$, where $\tau_i = Y_i(1)-Y_i(0)$,
        $S_{\hat{f}t}^2 = \sum_{i=1}^{m} (Y_{\hat{f}i}(t) -
        \overline{Y_{\hat{f}}(t)})^2/(m-1)$,
        $S_{F}^2 = \sum_{k=1}^K
        (\hat\lambda_{\hat{f}^{(-k)}}(\bZ_{m}^{(k)}) -
        \overline{\hat\lambda_{\hat{f}^{(-k)}}(\bZ^{(k)}_m)}^2/(K-1)$
        with
        $Y_{\hat{f}i}(t) =
        \mathbf{1}\{\hat{f}^{(-k)}(\bX_i)=1\}Y_i(t)$,
        $\overline{Y_{\hat{f}}(t)} = \sum_{i=1}^{m}
        Y_{\hat{f}i}(t)/m$, and
        $\overline{\hat\lambda_{\hat{f}^{(-k)}}(\bZ_{m}^{(k)})} =
        \sum_{k=1}^K \hat\lambda_{\hat{f}^{(-k)}}(\bZ_{m}^{(k)})/K$,
        for $t=\{0,1\}$.
\end{theorem}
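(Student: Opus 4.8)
The plan is to prove unbiasedness by iterated expectations and then to assemble the variance from the three identities already displayed before the theorem, so that the entire argument reduces to one new computation: simplifying the within-fold cross-observation covariance. For unbiasedness, I would first condition on the training data $\bZ_{n-m}^{(-k)}$. Given this data the rule $\hat{f}^{(-k)}$ is a fixed function, and since fold $k$ is an i.i.d. sample disjoint from the training folds, Theorem~\ref{thm:PAVest} applies \emph{conditionally} to give $\E\{\hat\lambda_{\hat{f}^{(-k)}}(\bZ_m^{(k)}) \mid \bZ_{n-m}^{(-k)}\} = \E\{Y_i(\hat{f}^{(-k)}(\bX_i)) \mid \bZ_{n-m}^{(-k)}\}$. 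Averaging over the training data, using $\bZ_{n-m}^{(-k)} \indep (Y_i(1),Y_i(0),\bX_i)$ and the definitions of $\bar{f}^{F}_{n-m}$ and $\lambda^{F}_{n-m}$, yields $\E\{\hat\lambda_{\hat{f}^{(-k)}}(\bZ_m^{(k)})\} = \lambda^{F}_{n-m}$ for each $k$; averaging over folds gives $\E(\hat\lambda_K^F(\bZ_n)) = \lambda^{F}_{n-m}$.

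For the variance I would take as given the cross-fold decomposition of $\V(\hat\lambda_K^F)$ into a within-fold variance plus a between-fold covariance, the Nadeau--Bengio identity rewriting that covariance as $\V(\hat\lambda_{\hat{f}^{(-k)}}(\bZ_m^{(k)})) - \E(S_F^2)$, and the resulting collapse $\V(\hat\lambda_K^F) = \V(\hat\lambda_{\hat{f}^{(-k)}}(\bZ_m^{(k)})) - \frac{K-1}{K}\E(S_F^2)$. I would then use the within-fold variance formula obtained by running the argument of Theorem~\ref{thm:PAVest} conditionally on the training data, the one new feature being that the per-unit outcome $Y_i(\hat{f}^{(-k)}(\bX_i))$ is now correlated across evaluation units $i \ne j$ through the common estimated rule. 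Substituting the simplified covariance (below) into that formula and then into the collapsed expression reproduces the four terms of the theorem, so the whole proof rests on the covariance computation.

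The crux is simplifying $\Cov(Y_i(\hat{f}^{(-k)}(\bX_i)), Y_j(\hat{f}^{(-k)}(\bX_j)))$ for $i \ne j$. Writing $\hat{f}_i = \hat{f}^{(-k)}(\bX_i)$ and using the identity $Y_i(\hat{f}_i) = \hat{f}_i \tau_i + Y_i(0)$, I would first note that because the training fold is independent of the evaluation fold and the evaluation units are mutually independent, every cross term involving the baseline $Y_i(0)$ vanishes, leaving $\Cov(\hat{f}_i \tau_i, \hat{f}_j \tau_j)$. I would then evaluate its two pieces by choosing the conditioning set deliberately: conditioning on the \emph{full evaluation data} $(\bX_i,\bX_j,\tau_i,\tau_j)$ leaves only the training randomness in $\hat{f}_i \hat{f}_j$, giving $\E(\hat{f}_i \tau_i \hat{f}_j \tau_j) = \E\{\E(\hat{f}_i \hat{f}_j \mid \bX_i,\bX_j)\,\tau_i\tau_j\}$; conditioning instead on each unit's covariate gives $\E(\hat{f}_i \tau_i) = \E\{\bar{f}^{F}_{n-m}(\bX_i)\,\tau_i\}$, and independence of units $i$ and $j$ turns the product $\E(\hat{f}_i\tau_i)\E(\hat{f}_j\tau_j)$ into $\E\{\E(\hat{f}_i\mid\bX_i)\E(\hat{f}_j\mid\bX_j)\,\tau_i\tau_j\}$. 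Subtracting, and using $\E(\hat{f}_i\mid\bX_i,\bX_j)=\E(\hat{f}_i\mid\bX_i)$ (as $\hat{f}_i$ depends on the data only through the training fold and $\bX_i$), produces exactly $\E\{\Cov(\hat{f}^{(-k)}(\bX_i),\hat{f}^{(-k)}(\bX_j)\mid\bX_i,\bX_j)\,\tau_i\tau_j\}$.

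The delicate part I expect to be the main obstacle is the bookkeeping of the independence structure in this final step: $\hat{f}_i$ is a function of both the training fold and $\bX_i$, so it is independent of $\tau_i$ and of unit $j$ given $\bX_i$, yet it is \emph{not} independent of $\hat{f}_j$ (they share the training fold) --- and this shared dependence is precisely what makes $\Cov(\hat{f}_i,\hat{f}_j\mid\bX_i,\bX_j)$ the nonzero driver of the additional variance. I would also be careful to flag that the cross-observation covariance relevant to the PAV is that of the sum $Y_{\hat{f}i}(1)+Y_{\hat{f}i}(0)=Y_i(\hat{f}_i)$, whose two summands never co-occur, so that it is the individual-effect product $\tau_i\tau_j$ --- rather than a product of sums of potential-outcome means --- that survives.
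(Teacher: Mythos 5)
Your proposal is correct and follows essentially the same route as the paper: unbiasedness by conditioning on the training fold so that Theorem~\ref{thm:PAVest} applies conditionally, the fold decomposition combined with the Nadeau--Bengio identity to collapse the variance to $\V(\hat{\lambda}_{\hat{f}^{(-k)}}(\bZ_m^{(k)})) - \frac{K-1}{K}\E(S_F^2)$, and a conditional rerun of the fixed-ITR variance argument with one extra cross-unit covariance term, whose simplification to $\E\{\Cov(\hat{f}^{(-k)}(\bX_i),\hat{f}^{(-k)}(\bX_j)\mid\bX_i,\bX_j)\,\tau_i\tau_j\}$ the paper defers to the appendix of the earlier work but you carry out correctly. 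Your closing remark is also well taken: the cross-unit covariance that actually enters is that of the sum $Y_{\hat{f}i}(1)+Y_{\hat{f}i}(0)=Y_i(\hat{f}^{(-k)}(\bX_i))$, which is the version consistent with the $\tau_i\tau_j$ factor in the stated variance, whereas the paper's intermediate display writes the difference $Y_{\hat{f}i}(1)-Y_{\hat{f}i}(0)$, which would instead yield a factor of $(Y_i(1)+Y_i(0))(Y_j(1)+Y_j(0))$.
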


In particular, we note that when compared with the fixed ITR setting,
there are two additional terms. One of these terms is proportional to
$\Cov(\hat{f}^{(-k)}(\bX_i), \hat{f}^{(-k)}(\bX_j) \mid \bX_i,
\bX_j)$, which represents the covariance between the evaluation
samples due to the training process, and the product of individual
treatment effects $\tau_i\tau_j$.  This term is often positive because
if the ITR is estimated well, it is more likely to make the same
treatment assignment to units when their individual treatment effects
are similar.  In numerical experiments, we typically find that this
term is usually relatively small. The other term
$- \frac{K-1}{K}\E(S_{F}^2)$ is always negative and quantifies the
efficiency gain resulting from utilizing the cross-validation
procedure. Furthermore, from Lemma~1 in \cite{nadeau2003inference}, we
can show that
\begin{equation*}
	\V(\hat{\lambda}_{\hat{f}^{(-k)}}(\bZ_m^{(k)})) \ = \
	\frac{\E(S_{\hat{f}1}^2)}{m_1} +
	\frac{\E(S_{\hat{f}0}^2)}{m_0}
	+\E \l\{\Cov(\hat{f}^{(-k)}(\bX_i),
	\hat{f}^{(-k)}(\bX_j) \mid
	\bX_i, \bX_j)\tau_i\tau_j\r\} \ \geq \ \E(S_{F}^2).
\end{equation*}
Therefore, maximally the efficiency gain resulting from the
cross-fitting procedure reduces the variance to $\E(S^2_F)$ when the
estimated PAV from each of $K$ folds is completely independent. 

\section{A Numerical Study}

In this section, we empirically validate our theoretical results
through a numerical study. In particular, we focus on demonstrating
the results related to the lack of invariance
(Proposition~\ref{prop:invariance}) and the efficiency comparison
between the ex-ante and ex-post estimators
(Theorem~\ref{thm:PAPEaest}).  Strong finite-sample performance of the
proposed estimators have been extensively demonstrated in our
previously published study \citep{imai2021experimental}.

In all our simulations, we utilize the 28th data generating process
(DGP) from the 2016 Atlantic Causal Inference Conference (ACIC)
Competition, of which the details are given in \cite{dori:etal:19}.
For the population distribution of pre-treatment covariates, we use
the empirical distribution of covariates from this sample of $n=4802$
observations with 58 covariates $\bX$ including 3 categorical, 5
binary, 27 count data, and 13 continuous variables. That is, we obtain
each simulation sample via bootstrap. We further assume that the
treatment assignment is completely randomized, and the treatment and
control groups are of equal size, i.e., $n_1=n_0=n/2$.  Finally, the
formula for the outcome model is reproduced in Appendix~\ref{app:sim}.

\begin{figure}[H]
	 \begin{subfigure}{.5\textwidth}
	\centering
	\includegraphics[width=\textwidth]{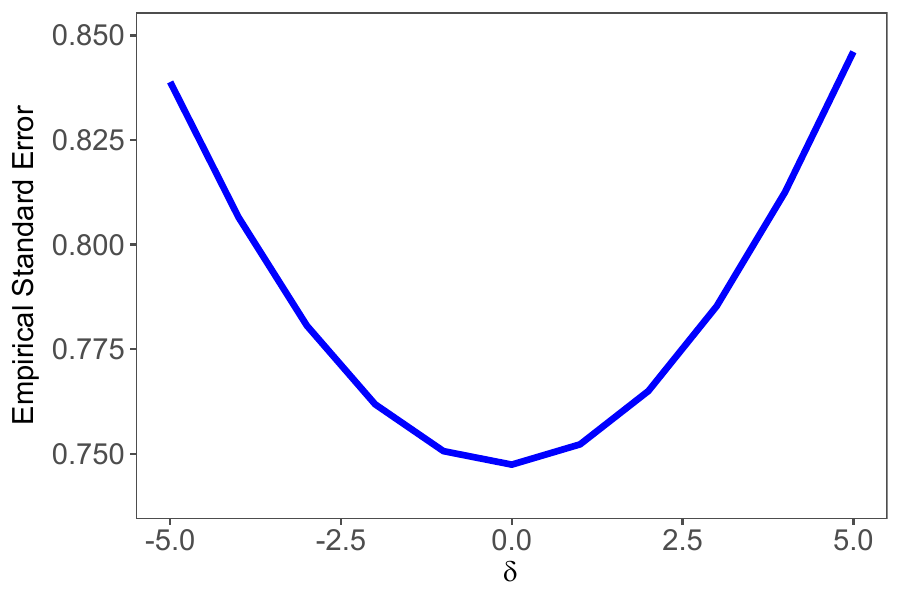}
        \caption{The empirical standard error of PAV estimator as a function of
          constant shift in potential outcomes. $\delta=0$ minimizes
          the standard error of PAV. }
	   	\label{subfig:var_shift}
	 \end{subfigure}%
	 \begin{subfigure}{.5\textwidth}
	   \centering
	   \includegraphics[width=\textwidth]{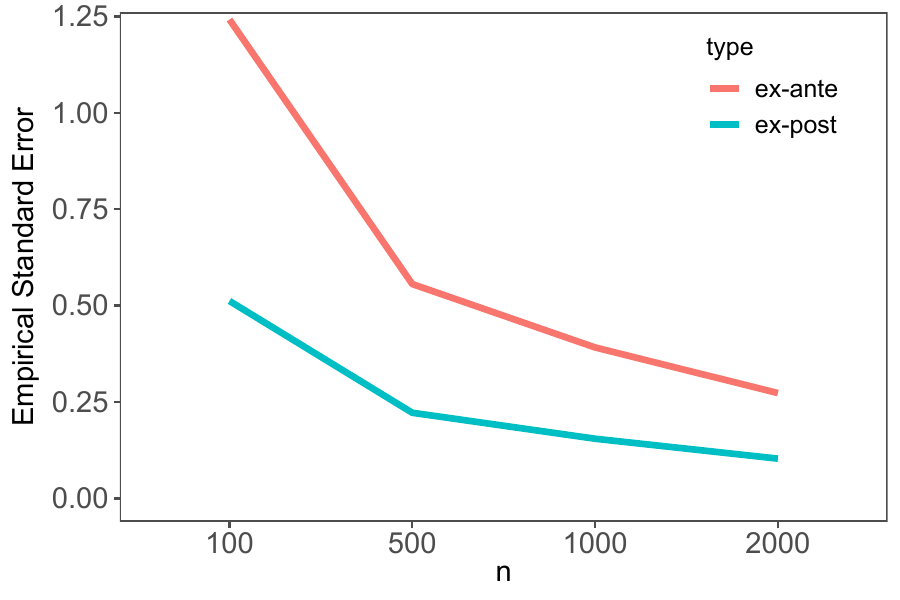}
	   \caption{Comparison of empirical standard error of the
             ex-ante and ex-post PAPE estimators (y-axis) for various
             sample sizes (x-axis).  }
	   \label{subfig:ante_post}
	 \end{subfigure}
	\caption{Numerical Experiments}
	\label{fig:neyman_exp}
\end{figure}

First, we investigate the effect of shifting potential outcomes by a
constant on the variance of estimators. Figure~\ref{subfig:var_shift}
plots the empirical standard deviation of the PAV estimator (the
vertical axis) with $n=100$ samples from the DGP as a function of
constant shift in potential outcomes (the horizontal axis). Here, we
centered the potential outcomes shift so that the optimal value of
$\delta$ given in Proposition~\ref{prop:invariance} is zero, i.e.,
\[\frac{n_0}{n}\kappa_{11}+\frac{n_1}{n}\kappa_{00}=0.\]
As predicted by our theoretical analysis, we find that balancing the
potential outcomes leads to a lower standard error in the estimator
due to the unbalanced nature of the relevant potential outcomes
$\mathbf{1}\{f(\bX_i)=t\}Y_i(t)$.

Second, we compare the statistical efficiency of the ex-ante and
ex-post PAPE estimators under the assumption $n_f=n_r=n/2$ and
$n_{r1}=n_{r0}=n/4$.  Consistent with our theoretical results,
Figure~\ref{subfig:ante_post} shows that the standard error of the
ex-ante estimator is consistently greater than that of the ex-post
estimator.  For example, when the sample size is 500, the former is
over twice the latter.

\section{Conclusion}

In this article, we provided a short overview of how Neyman's repeated
sampling framework can be utilized to experimentally evaluate the
performance of arbitrary ITRs.  We consider the two settings, one in
which an IRT is given and the other in which an ITR is estimated from
the same data.  We also demonstrated the new challenges that result
from the application of Neyman's framework, including the lack of
invariance of evaluation estimators and the need to incorporate the
uncertainty due to training of machine learning algorithms. We further
demonstrated how Neyman's repeated-sampling framework can highlight
the difference between the \emph{ex-ante} evaluation and
\emph{ex-post} evaluation of ITRs by showing that the \emph{ex-post}
evaluation is statistically more efficient.  Our ongoing work also
applies this framework to the estimation of heterogeneous treatment
effects discovered by machine learning algorithms
\citep{imai2023hetero}.  All together, we have shown that a century
after his original proposal, Neyman's analytical framework remains
relevant and widely applicable to the evaluation of today's causal
machine learning methods.

\paragraph*{Acknowledgments} The authors would like to thank Peng Ding and the two anonymous reviewers for their helpful and invaluable feedback during the review process. 
\paragraph*{Funding information}
None declared.
\paragraph*{Author contributions}
All authors have accepted responsibility for the entire content of this manuscript and approved its submission.
\paragraph*{Conflict of interest}
Authors state no conflict of interest.
\paragraph*{Data availability statement}
The numerical experiments included in the current study can be reproduced with the R scripts available at https://github.com/MichaelLLi/NeymanMLCode.

\bibliographystyle{ieeetr}
\bibliography{neymanml, imai, my}

\begin{thebibliography}{10}

\bibitem{neyman1923application}
J.~Neyman, ``On the application of probability theory to agricultural
  experiments. essay on principles,'' {\em Ann. Agricultural Sciences},
  pp.~1--51, 1923.

\bibitem{imai:ratk:13}
K.~Imai and M.~Ratkovic, ``Estimating treatment effect heterogeneity in
  randomized program evaluation,'' {\em Annals of Applied Statistics}, vol.~7,
  pp.~443--470, March 2013.

\bibitem{athe:imbe:16}
S.~Athey and G.~Imbens, ``Recursive partitioning for heterogeneous causal
  effects,'' {\em Proceedings of the National Academy of Sciences}, vol.~113,
  no.~27, pp.~7353--7360, 2016.

\bibitem{wage:athe:18}
S.~Wager and S.~Athey, ``Estimation and inference of heterogeneous treatment
  effects using random forests,'' {\em Journal of the American Statistical
  Association}, vol.~113, no.~523, pp.~1228--1242, 2018.

\bibitem{hahn:murr:carv:20}
P.~R. Hahn, J.~S. Murray, and C.~M. Carvalho, ``Bayesian regression tree models
  for causal inference: regularization, confounding, and heterogeneous
  effects,'' {\em Bayesian Analysis}, vol.~15, no.~3, pp.~965--1056, 2020.

\bibitem{dudi:etal:11}
M.~Dud\'{\i}k, J.~Langford, and L.~Li, ``Doubly robust policy evaluation and
  learning,'' in {\em Proceedings of the 28th International Conference on
  International Conference on Machine Learning}, ICML'11, (USA),
  pp.~1097--1104, Omnipress, 2011.

\bibitem{zhan:etal:12a}
B.~Zhang, A.~A. Tsiatis, M.~Davidian, and E.~Laber, ``Estimating optimal
  treatment regimes from a classification perspective,'' {\em Stat}, vol.~1,
  no.~1, pp.~103--114, 2012.

\bibitem{chak:labe:zhao:14}
B.~Chakraborty, E.~Laber, and Y.-Q. Zhao, ``Inference about the expected
  performance of a data-driven dynamic treatment regime,'' {\em Clinical
  Trials}, vol.~11, no.~4, pp.~408--417, 2014.

\bibitem{jian:li:16}
N.~Jiang and L.~Li, ``Doubly robust off-policy value evaluation for
  reinforcement learning,'' in {\em Proceedings of The 33rd International
  Conference on Machine Learning} (M.~F. Balcan and K.~Q. Weinberger, eds.),
  vol.~48 of {\em Proceedings of Machine Learning Research}, (New York, New
  York, USA), pp.~652--661, PMLR, 20--22 Jun 2016.

\bibitem{kall:18}
N.~Kallus, ``Balanced policy evaluation and learning,'' in {\em Advances in
  Neural Information Processing Systems 31} (S.~Bengio, H.~Wallach,
  H.~Larochelle, K.~Grauman, N.~Cesa-Bianchi, and R.~Garnett, eds.),
  pp.~8895--8906, Curran Associates, Inc., 2018.

\bibitem{qi2020multi}
Z.~Qi, D.~Liu, H.~Fu, and Y.~Liu, ``Multi-armed angle-based direct learning for
  estimating optimal individualized treatment rules with various outcomes,''
  {\em Journal of the American Statistical Association}, vol.~115, no.~530,
  pp.~678--691, 2020.

\bibitem{mo2022efficient}
W.~Mo and Y.~Liu, ``Efficient learning of optimal individualized treatment
  rules for heteroscedastic or misspecified treatment-free effect models,''
  {\em Journal of the Royal Statistical Society Series B: Statistical
  Methodology}, vol.~84, no.~2, pp.~440--472, 2022.

\bibitem{benm:etal:21}
E.~Ben-Michael, J.~Greiner, K.~Imai, and Z.~Jiang, ``Safe policy learning
  through extrapolation: Application to pre-trial risk assessment,'' tech.
  rep., arXiv:2109.11679, 2021.

\bibitem{imai2021experimental}
K.~Imai and M.~L. Li, ``Experimental evaluation of individualized treatment
  rules,'' {\em Journal of the American Statistical Association}, vol.~118,
  no.~541, pp.~242--256, 2023.

\bibitem{rubi:90}
D.~B. Rubin, ``Comments on ``{O}n the application of probability theory to
  agricultural experiments. {E}ssay on principles. {S}ection 9'' by {J.}
  {S}plawa-{N}eyman translated from the {P}olish and edited by {D.} {M.}
  {D}abrowska and {T.} {P.} {S}peed,'' {\em Statistical Science}, vol.~5,
  pp.~472--480, 1990.

\bibitem{ding2017bridging}
P.~Ding, X.~Li, and L.~W. Miratrix, ``Bridging finite and super population
  causal inference,'' {\em Journal of Causal Inference}, vol.~5, no.~2,
  p.~20160027, 2017.

\bibitem{qian:murp:11}
M.~Qian and S.~A. Murphy, ``Performance gurantees for individualized treatment
  rules,'' {\em Annals of Statistics}, vol.~39, no.~2, pp.~1180--1210, 2011.

\bibitem{lued:vand:16a}
A.~R. Luedtke and M.~J. van~der Laan, ``Optimal individualized treatments in
  resource-limited settings,'' {\em International Journal of Biostatistics},
  vol.~12, no.~1, pp.~283--303, 2016.

\bibitem{lued:vand:16}
A.~R. Luedtke and M.~J. van~der Laan, ``Statistical inference for the mean
  outcome under a possibly non-unique optimal treatment strategy,'' {\em Annals
  of Statistics}, vol.~44, no.~2, pp.~713--742, 2016.

\bibitem{zhou:etal:17}
X.~Zhou, N.~{Mayer-Hamblett}, U.~Khan, and M.~R. Kosorok, ``Residual weighted
  learning for estimating individualized treatment rules,'' {\em Journal of the
  American Statistical Association}, vol.~112, no.~517, pp.~169--187, 2017.

\bibitem{kita:tete:18}
T.~Kitagawa and A.~Tetenov, ``Who should be treated?: Empirical welfare
  maximization methods for treatment choice,'' {\em Econometrica}, vol.~86,
  pp.~591--616, March 2018.

\bibitem{radcliffe2007using}
N.~J. Radcliffe, ``Using control groups to target on predicted lift: Building
  and assessing uplift models,'' {\em Direct Marketing Analytics Journal},
  vol.~1, no.~3, pp.~14--21, 2007.

\bibitem{yadlowsky2021evaluating}
S.~Yadlowsky, S.~Fleming, N.~Shah, E.~Brunskill, and S.~Wager, ``Evaluating
  treatment prioritization rules via rank-weighted average treatment effects,''
  {\em arXiv preprint arXiv:2111.07966}, 2021.

\bibitem{kumar2020orderrex}
A.~Kumar, R.~C. Aikens, J.~Hom, L.~Shieh, J.~Chiang, D.~Morales, D.~Saini,
  M.~Musen, M.~Baiocchi, R.~Altman, {\em et~al.}, ``Orderrex clinical user
  testing: a randomized trial of recommender system decision support on
  simulated cases,'' {\em Journal of the American Medical Informatics
  Association}, vol.~27, no.~12, pp.~1850--1859, 2020.

\bibitem{forman2019randomized}
E.~M. Forman, S.~P. Goldstein, R.~J. Crochiere, M.~L. Butryn, A.~S. Juarascio,
  F.~Zhang, and G.~D. Foster, ``Randomized controlled trial of ontrack, a
  just-in-time adaptive intervention designed to enhance weight loss,'' {\em
  Translational behavioral medicine}, vol.~9, no.~6, pp.~989--1001, 2019.

\bibitem{chernozhukov2018double}
V.~Chernozhukov, D.~Chetverikov, M.~Demirer, E.~Duflo, C.~Hansen, W.~Newey, and
  J.~Robins, ``Double/debiased machine learning for treatment and structural
  parameters,'' 2018.

\bibitem{nadeau2003inference}
C.~Nadeau and Y.~Bengio, ``Inference for the generalization error,'' {\em
  Machine Learning}, vol.~52, no.~3, pp.~239--281, 2003.

\bibitem{dori:etal:19}
V.~Dorie, J.~Hill, U.~Shalit, M.~Scott, and D.~Cervone, ``Automated versus
  do-it-yourself methods for causal inference: Lessons learned from a data
  analysis competition,'' {\em Statistical Science}, vol.~34, pp.~43--68,
  February 2019.

\bibitem{imai2023hetero}
K.~Imai and M.~L. Li, ``Statistical inference for heterogeneous treatment
  effects discovered by generic machine learning in randomized experiments,''
  {\em arXiv preprint \url{https://arxiv.org/pdf/2203.14511.pdf}}.

\bibitem{neym:23}
J.~Neyman, ``On the application of probability theory to agricultural
  experiments: Essay on principles, section 9. (translated in 1990),'' {\em
  Statistical Science}, vol.~5, pp.~465--480, 1923.

\end{thebibliography}

\appendix
\section{Proof of Theorem~\ref{thm:PAPEaest}}
\label{app:PAPEaest}

We first consider the following intermediate estimator,
\begin{equation}
  \tilde\tau_f^{*}(\bZ_n) \ = \ \frac{1}{n_f}\sum_{i=1}^{n} Y_i(f(\bX_i))F_i-\frac{1}{n_r} \sum_{i=1}^n Y_i(T_i)(1-F_i). \label{eq:SAPEaest}
\end{equation}
This estimator differs from the \textit{ex-ante} estimator of the PAPE
$\hat\tau_f^*$, by a small factor, i.e.,
$\tilde\tau_f^{*} = (n-1)/n \hat\tau_f$ under the condition that $\hat{p}_f=n_{r1}/r_1$. The following lemma
derives the expectation and variance of this estimator.  Using this lemma,
the results of Theorem~\ref{thm:PAPEaest} can be obtained
immediately.
\begin{lemma}[Expectation and Variance of the Intermediate
  Estimator] \label{lemma:SAPEa} Under
  Assumptions~\ref{asm:SUTVA},~\ref{asm:randomsample},~and~\ref{asm:exante_random},
  the expectation and variance of the estimator given in
  Equation~\eqref{eq:SAPEaest} for estimating the PAPE defined in
  Equation~\eqref{eq:PAPE} are given by,
	\begin{eqnarray*}
		\E(\tilde\tau_f^{*}(\bZ_n))& = & \frac{n-1}{n}\tau_f, \\
		\V(\tilde\tau_f^{*}(\bZ_n)) & = &  \frac{\E(S_f^2)}{n_f}+\E\l\{\frac{\hat{p}_f^2S_1^2}{n_{r1}}+\frac{(1-\hat{p}_f)^2S_0^2}{n_{r0}}\r\}+\frac{1}{n^2} \l\{\tau_f^2 -n p_f(1-p_f)
		\tau^2 + 2(n-1)(2p_f-1) \tau_f\tau \r\}.
	\end{eqnarray*}
\end{lemma}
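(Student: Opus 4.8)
The plan is to collapse the two-stage randomization into a single completely randomized three-arm experiment and then run Neyman's repeated-sampling bookkeeping through the law of total variance. The key preliminary observation is that drawing $\bF$ by complete randomization (assigning $n_f$ of the $n$ units to the ITR) and then drawing $\bT$ by complete randomization (treating $n_{r1}$ of the remaining $n_r$ units) is \emph{exactly} a single completely randomized allocation of the $n$ units into three arms of sizes $(n_f, n_{r1}, n_{r0})$: since $\binom{n}{n_f}\binom{n_r}{n_{r1}} = n!/(n_f!\,n_{r1}!\,n_{r0}!)$, every such allocation is equally likely. Writing $a_i = Y_i(f(\bX_i))$, $u_i = Y_i(1)$, $v_i = Y_i(0)$ and using the design identity $1/n_r = \hat p_f/n_{r1} = (1-\hat p_f)/n_{r0}$ (which holds because $\hat p_f = n_{r1}/n_r$), I would rewrite the estimator as the arm-mean contrast $\tilde\tau_f^\ast(\bZ_n) = \hat\lambda_1 - \hat p_f\hat\mu_1 - (1-\hat p_f)\hat\mu_0$, where $\hat\lambda_1$, $\hat\mu_1$, $\hat\mu_0$ are the sample means of $a_i$, $u_i$, $v_i$ over the ITR, treated, and control arms respectively.

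For unbiasedness, I would condition on the sample $\{Y_i(1),Y_i(0),\bX_i\}_{i=1}^n$. Averaging over the randomization, each arm mean is unbiased for its full-sample average (and $\hat p_f$ is fixed given the sample), so $\E\{\tilde\tau_f^\ast \mid \text{sample}\} = \overline{Y(f(\bX))} - \hat p_f\,\overline{Y(1)} - (1-\hat p_f)\,\overline{Y(0)}$, which rearranges to the (biased, $1/n$-normalized) sample covariance $\frac1n\sum_{i=1}^n (f(\bX_i)-\hat p_f)\,\tau_i$ with $\tau_i = Y_i(1)-Y_i(0)$. Taking a further expectation over the i.i.d. sampling and splitting the double sum arising from the product of $\hat p_f$ with the sample mean of $\tau_i$ into diagonal ($i=j$) and off-diagonal ($i\neq j$) parts gives $\frac{n-1}{n}\Cov(f(\bX_i),\tau_i) = \frac{n-1}{n}\tau_f$, as claimed.

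For the variance I would use $\V(\tilde\tau_f^\ast) = \E\{\V(\tilde\tau_f^\ast\mid\text{sample})\} + \V(\E\{\tilde\tau_f^\ast\mid\text{sample}\})$. For the inner (randomization) variance I would apply Neyman's exact multi-arm variance formula, expanding $\V(\sum_i\sum_j g_{ij}G_i^{(j)}\mid\text{sample})$ through the arm-indicator (co)variances $\V(G_i^{(j)})$, the within-arm cross-unit covariances $-m_j(n-m_j)/\{n^2(n-1)\}$, and the between-arm covariances $\pm m_j m_{j'}/\{n^2(n-1)\}$ and $-m_j m_{j'}/n^2$. This produces within-arm finite-population variances $S_f^2$, $S_1^2$, $S_0^2$ carrying finite-population-correction factors, together with cross-arm sample covariances; taking expectations over the sampling, the ITR-arm variance reduces to $\E(S_f^2)=\sigma_f^2 := \V(Y_i(f(\bX_i)))$, while the treated- and control-arm variances remain coupled to the random weight $\hat p_f$ and stay inside the expectation. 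For the outer (between-sample) term, $B := \E\{\tilde\tau_f^\ast\mid\text{sample}\}$ is a sample covariance, so $\V(B)$ follows from the standard fourth-moment formula and contains both identifiable pieces and \emph{non-identifiable} pieces---in particular terms involving $\V(Y_i(1)-Y_i(0))$ and a mixed fourth moment of $(f(\bX_i)-p_f)$ and $\tau_i$.

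The main obstacle is the final reconciliation, which requires two cancellations. First, the classical superpopulation cancellation (exactly as in \eqref{eq:diffvar}) removes the $1/n$ finite-population-correction pieces of the arm-mean variances against the leading between-sample variances, collapsing them to $\E(S_f^2)/n_f$ and $\E\{\hat p_f^2 S_1^2/n_{r1} + (1-\hat p_f)^2 S_0^2/n_{r0}\}$ with no correction factors. Second, and more delicately, the non-identifiable quantities in $\V(B)$---notably every term carrying $\V(Y_i(1)-Y_i(0))$---must cancel exactly against the non-identifiable contributions generated by the between-arm covariances in $\E\{\V(\tilde\tau_f^\ast\mid\text{sample})\}$, leaving only the identifiable remainder $\frac{1}{n^2}\{\tau_f^2 - n p_f(1-p_f)\tau^2 + 2(n-1)(2p_f-1)\tau_f\tau\}$. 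I expect the bookkeeping here to be the crux: one must track every cross-arm covariance while keeping $\hat p_f$ random and correlated with the arm memberships, and organize the algebra so that the $\V(Y_i(1)-Y_i(0))$ terms provably vanish---a structure that already appears in the variance of the fixed-ITR PAPE estimator in Theorem~\ref{thm:PAPEest}, and which I would exploit as a consistency check.
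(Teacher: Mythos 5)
Your overall route is the same as the paper's---law of total variance over the nested randomizations and the sampling, Neyman's exact randomization variance inside, and a superpopulation cancellation of the non-identifiable pieces---but you reorganize the first stage in a genuinely different and arguably cleaner way: you collapse the two-stage assignment into a single completely randomized three-arm experiment with arm sizes $(n_f, n_{r1}, n_{r0})$ and apply a multi-arm Neyman formula once, whereas the paper iterates conditional expectations ($T$ given $F$ and the sample, then $F$ given the sample, centering $F_i$ as $D_i = F_i - n_f/n$ to isolate the first-stage randomization variance) and invokes Neyman's two-arm result only within the $F_i=0$ group. Your expectation derivation is complete and correct, and coincides with the paper's after the two conditioning steps are merged: both reduce to $\E\{\tfrac{1}{n}\sum_i (f(\bX_i)-\hat{p}_f)\tau_i\} = \tfrac{n-1}{n}\tau_f$ via the diagonal/off-diagonal split.

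The variance argument, however, is a plan rather than a proof, and you say so yourself ("I expect the bookkeeping here to be the crux"). The two cancellations you name are real---in the paper they appear as (i) the $-2\hat{p}_f(1-\hat{p}_f)S_{01}$ term from Neyman's variance in the randomized arm cancelling against the $+2\hat{p}_f(1-\hat{p}_f)S_{01}$ hidden inside $\E(S_m^2)$ with $S_m^2$ the sample variance of $\hat{p}_fY_i(1)+(1-\hat{p}_f)Y_i(0)$, and (ii) the finite-population corrections combining so that $\tfrac{\E(S_m^2)}{n_r}$ plus the arm terms collapse to $\E\{\hat{p}_f^2S_1^2/n_{r1}+(1-\hat{p}_f)^2S_0^2/n_{r0}\}$---but you have not carried either out, nor have you derived the closed form $\tfrac{1}{n^2}\{\tau_f^2 - np_f(1-p_f)\tau^2 + 2(n-1)(2p_f-1)\tau_f\tau\}$. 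The paper gets that last piece not from a fresh fourth-moment computation but by recognizing the between-sample term as the cross-unit covariance of $(f(\bX_i)-\hat{p}_f)(Y_i(1)-Y_i(0))$, which is exactly the quantity already evaluated for the fixed-ITR PAPE estimator in Theorem~\ref{thm:PAPEest}; you gesture at this as a "consistency check" but it is actually the load-bearing step. One further imprecision: given the sample, $\hat{p}_f=\sum_i f(\bX_i)/n$ is a deterministic function of the covariates and hence \emph{not} correlated with the arm memberships; its randomness enters only across samples (where it also determines $n_{r1}$), so the delicacy you anticipate there is less severe than you suggest.
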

\begin{proof}
  We first derive the bias expression. First, we take the expectation
  with respect to $T_i$,
  \begin{eqnarray*}
    & & \E[\tilde\tau_f^\ast(\bZ_n) \mid \{\bX_i, Y_i(1), Y_i(0), F_i\}_{i=1}^n] \\
    & = &\E\l[\frac{1}{n_f}\sum_{i=1}^{n}
          Y_i(f(\bX_i))F_i-\frac{1}{n_r}
          \sum_{i=1}^n \{Y_i(1)T_i+Y_i(0)(1-T_i)\}(1-F_i) \ \Bigl | \ \{\bX_i,
          Y_i(1), Y_i(0), F_i\}_{i=1}^n\r]\\
    &= &\frac{1}{n_f}\sum_{i=1}^{n}
         Y_i(f(\bX_i))F_i-\frac{1}{n_r}
         \sum_{i=1}^n \l\{Y_i(1)\frac{\sum_{i=1}^n
         f(\bX_i)}{n}+Y_i(0)\l(1-\frac{\sum_{i=1}^n
         f(\bX_i)}{n}\r)\r\}(1-F_i).
  \end{eqnarray*}
  Next, we take the expectation with respect to $F_i$:
  \begin{eqnarray*}
    & &\E\l[\frac{1}{n_f}\sum_{i=1}^{n}
        Y_i(f(\bX_i))F_i \r. \\
    & & \l. \hspace{.5in} -\frac{1}{n_r}
        \sum_{i=1}^n \l\{Y_i(1)\frac{\sum_{i=1}^n
        f(\bX_i)}{n}+Y_i(0)\l(1-\frac{\sum_{i=1}^n
        f(\bX_i)}{n}\r)\r\}(1-F_i) \ \Bigl | \ \{\bX_i, Y_i(1),
        Y_i(0)\}_{i=1}^n  \r]\\
    &= & \frac{1}{n} \sum_{i=1}^n Y_i(f(\bX_i))
         -\frac{1}{n_rn}\sum_{i=1}^{n}\sum_{j=1}^{n}\E\l[Y_i(1)f(\bX_j)(1-F_i)+Y_i(0)\l(1-f(\bX_j)\r)(1-F_i)\mid
         \{\bX_i, Y_i(1), Y_i(0)\}_{i=1}^n\r]\\
		& = & \frac{1}{n} \sum_{i=1}^n
                      Y_i(f(\bX_i)) -
                      \frac{1}{n_rn}
                      \sum_{i=1}^n \sum_{j=1}^n \l\{
                      \frac{n_rn}{n^2}Y_i(1)f(\bX_j)+Y_i(0)\frac{n_rn}{n^2}(1-f(\bX_j))\r\}\\
    &=&\frac{1}{n} \sum_{i=1}^n Y_i(f(\bX_i)) - \frac{1}{n^2}
        \sum_{i=1}^n \sum_{j=1}^n
        \l(Y_i(1)f(\bX_j)+Y_i(0)\l(1-f(\bX_j)\r)\r).
  \end{eqnarray*}
  Finally, we take the expectation over the sampling of
  $\{\bX_i, Y_i(1), Y_i(0)\}$:
  \begin{eqnarray*}
    & &\E\l[\frac{1}{n} \sum_{i=1}^n Y_i(f(\bX_i)) - \frac{1}{n^2}
        \sum_{i=1}^n \sum_{j=1}^n
        \l\{Y_i(1)f(\bX_j)+Y_i(0)\l(1-f(\bX_j)\r)\r\}\r]\\
    &= &
         \E\{Y_i(f(\bX_i))\}-p_f\E(Y_i(1))-\l(1-p_f\r)\E(Y_i(0))-\frac{1}{n^2}\sum_{i=1}^n
         \E\{\Cov(Y_i(1),f(\bX_i))+\Cov(Y_i(0),1-f(\bX_i))\}\\
    &=&\tau_f-\frac{1}{n}\Cov(Y_i(1)-Y_i(0),f(\bX_i))\\&=&\frac{n-1}{n}\tau_f.
  \end{eqnarray*}
  For the variance expression, we proceed as follows:
  \begin{align*}
    \V(\tilde\tau_f^{*}(\bZ_n)) & \ = \ \V\{\E(\tau_f^{*}(\bZ_n)\mid
                      \{\bX_i, Y_i(1), Y_i(0), F_i\}_{i=1}^n)\}+\E\{\V(\tau_f^{*}(\bZ_n)
                      \mid \bX_i, Y_i(1), Y_i(0), F_i)\}\\
                    & \ = \ \V\l[\frac{1}{n_f} \sum_{i=1}^n
                      Y_i(f(\bX_i))F_i  -
                      \frac{1}{n_r}
                      \sum_{i=1}^n\{Y_i(1)\hat{p}_f+Y_i(0)(1-\hat{p}_f)\}(1-F_i)\r]\\
                    & \hspace{.5in} +\E\l\{\frac{1}{n_r^2}\V\l[\sum_{i=1}^n
                      \{Y_iT_i+Y_i(1-T_i)\}(1-F_i) \ \Bigl  | \ \{\bX_i,
		Y_i(1), Y_i(0), F_i\}_{i=1}^n\r]\r\}.
  \end{align*}
  For the first term, we further use the law of total variance by
  conditioning on the sample, and center $F_i$ via the transformation
  $D_i = F_i - n_f/n$. For the second term, we use the results of
  \cite{neym:23}, with the following notation,
  \begin{equation*}
    S_{t}^2  \ = \  \frac{1}{n-1} \sum_{i=1}^n (Y_i(t)
    -\overline{Y(t)})^2, \quad
    S_{01} = \frac{1}{n-1} \sum_{i=1}^n(Y_i(0) - \overline{Y(0)}) (Y_i(1)
    - \overline{Y(1)}),
  \end{equation*}
  for $t=0,1$.  Then, the variance becomes,
  \begin{align*}
    \V(\tilde\tau_f^{*}(\bZ_n)) & \ = \ \E\l[\V\l\{\frac{1}{n}\sum_{i=1}^n D_i \l(\frac{n}{n_f}
                      Y(f(\bX_i)) +\frac{n}{n_r}
                      \widehat{Y}_i\r) \ \Bigl | \
                      \{\bX_i,Y_i(1),Y_i(0)\}_{i=1}^n \r\}\r] \\
                    & \quad +\V\l\{\frac{1}{n}\sum_{i=1}^n
                      \l(Y(f(\bX_i))-\widehat{Y}_i\r)\r\} +\E\l[\frac{1}{n_r}\l\{\frac{\hat{p}_f^2n_{r0}S_1^2}{n_{r1}}+\frac{(1-\hat{p}_f)^2n_{r1}S_0^2}{n_{r0}}-2\hat{p}_f(1-\hat{p}_f)S_{01}\r\}\r],
  \end{align*}
  where $\widehat{Y}_i=\hat{p}_fY_i(1)+(1-\hat{p}_f)Y_i(0)$.
  Then, we have,
  \begin{align*}
    \V(\tilde\tau_f^{*}(\bZ_n)) & \ = \
		\frac{\E(S_f^2)}{n_f}+\frac{\E(S_m^2)}{n_r}+\E\l[\frac{1}{n_r}\l\{\frac{\hat{p}_f^2n_{r0}S_1^2}{n_{r1}}+\frac{(1-\hat{p}_f)^2n_{r1}S_0^2}{n_{r0}}-2\hat{p}_f(1-\hat{p}_f)S_{01}\r\}\r]\\
		& \quad
		+\Cov((f(\bX_i)-\hat{p}_f)Y_i(1)-(f(\bX_i)-\hat{p}_f)Y_i(0),(f(\bX_i)-\hat{p}_f)Y_i(1)-(f(\bX_i)-\hat{p}_f)Y_i(0))\\
		& \ = \ \frac{\E(S_f^2)}{n_f}+\frac{\E(S_m^2)}{n_r}+\E\l[\frac{1}{n_r}\l\{\frac{\hat{p}_f^2n_{r0}S_1^2}{n_{r1}}+\frac{(1-\hat{p}_f)^2n_{r1}S_0^2}{n_{r0}}-2\hat{p}_f(1-\hat{p}_f)S_{01}\r\}\r]\\
		& \quad +\frac{1}{n^2}\l\{\tau_f^2 -n p_f(1-p_f)
		\tau^2 + 2(n-1)(2p_f-1) \tau_f\tau \r\} \\
		& \ = \ \frac{\E(S_f^2)}{n_f}+\E\l\{\frac{\hat{p}_f^2S_1^2}{n_{r1}}+\frac{(1-\hat{p}_f)^2S_0^2}{n_{r0}}\r\}+\frac{1}{n^2} \l\{\tau_f^2 -n p_f(1-p_f)
		\tau^2 + 2(n-1)(2p_f-1) \tau_f\tau \r\},
	\end{align*}
	where
	$S_m^2=\frac{1}{n-1}\sum_{i=1}^n
	(\widehat{Y}_i-\overline{\widehat{Y}})^2$ and the last equality
	follows from the
	$\E(S_m^2)=\E\{\hat{p}_f^2S_1^2+(1-\hat{p}_f^2)S_0^2+2\hat{p}_f(1-\hat{p}_f)S_{01}\}$.
\end{proof}
%
\section{Proof of Proposition~\ref{prop:invariance}}
\label{app:invariance}
By definition, we have that:
\begin{align*}
& \V(\hat\lambda^\delta_f(\bZ))-\V(\hat\lambda_f(\bZ)) \\
\ = \ &\V\left(\hat\lambda_f(\bZ) + \frac{\delta}{n_1} \sum_{i=1}^n T_if(\bX_i)+ \frac{\delta}{n_0} \sum_{i=1}^n (1-T_i)(1-f(\bX_i))\right) - \V(\hat\lambda_f(\bZ))
  \\
  \ =  \ &2\delta \cdot \Cov\left(\hat\lambda_f(\bZ) , \frac{1}{n_1}
           \sum_{i=1}^n T_if(\bX_i)+ \frac{1}{n_0} \sum_{i=1}^n
           (1-T_i)(1-f(\bX_i))\right) \\
  & \hspace{.25in} + \delta^2 \cdot \V\left(\frac{1}{n_1} \sum_{i=1}^n T_if(\bX_i)+\frac{1}{n_0} \sum_{i=1}^n (1-T_i)(1-f(\bX_i))\right)
\\ = \ &2\delta \left\{\frac{\Cov(f(\bX_i)Y_i(1),f(\bX_i))}{n_1} +
         \frac{\Cov((1-f(\bX_i))Y_i(1),1-f(\bX_i))}{n_0}\right\} + \delta^2 \left\{\frac{p_f(1-p_f)}{n_1}+ \frac{p_f(1-p_f)}{n_0}\right\}
\\ = \ &\delta p_f(1-p_f)\left(\frac{2}{n_1}\E[Y_i(1) \mid
         f(\bX_i)=1]+ \frac{2}{n_0}\E[Y_i(0) \mid f(\bX_i)=0]+
         \delta \cdot \frac{n}{n_1n_0}\right).
\end{align*}
Define $b =\frac{n}{n-1}=1+\frac{1}{n-1}$.  Then, we have:
\begin{align*}
  &\V(\hat\tau^\delta_f(\bZ))-\V(\hat\tau_f(\bZ))
  \\= \ &\V\left(\hat\tau_f(\bZ) + \frac{b\delta}{n_1} \sum_{i=1}^n T_if(\bX_i)+ \frac{b\delta}{n_0} \sum_{i=1}^n (1-T_i)(1-f(\bX_i))\right) - \V(\hat\tau_f(\bZ))
  \\= \ &2b\delta \cdot \Cov\left(\hat\tau_f(\bZ) , \frac{1}{n_1}
          \sum_{i=1}^n T_if(\bX_i)+ \frac{1}{n_0} \sum_{i=1}^n
          (1-T_i)(1-f(\bX_i))\right) + b^2\delta^2 \cdot \V\left(\frac{1}{n_1} \sum_{i=1}^n T_if(\bX_i)+\frac{1}{n_0} \sum_{i=1}^n (1-T_i)(1-f(\bX_i))\right)
  \\= \ &2b^2\delta \cdot \Cov\left(\hat\lambda_f(\bZ) , \frac{1}{n_1}
       \sum_{i=1}^n T_if(\bX_i)+ \frac{1}{n_0} \sum_{i=1}^n
       (1-T_i)(1-f(\bX_i))\right) \\
  & \hspace{.25in} - 2 b^2\delta \cdot
    \Cov\left(\frac{\hat{p}_f}{n_1} \sum_{i=1}^n Y_iT_i+ \frac{1-\hat{p}_f}{n_0}  \sum_{i=1}^n Y_i(1-T_i) ,
    \frac{1}{n_1} \sum_{i=1}^n  T_if(\bX_i)+ \frac{1}{n_0}
    \sum_{i=1}^n (1-T_i)(1-f(\bX_i))\right)\\
  & \hspace{.25in} + b^2\delta^2\cdot \V\left(\frac{1}{n_1} \sum_{i=1}^n T_if(\bX_i)+\frac{1}{n_0} \sum_{i=1}^n (1-T_i)(1-f(\bX_i))\right)
  \\= \ & b^2 \delta p_f(1-p_f)\left(\frac{2}{n_1}\E[Y_i(1) \mid
          f(\bX_i)=1]+ \frac{2}{n_0}\E[Y_i(0) \mid f(\bX_i)=0]+ \delta
          \cdot \frac{n}{n_1n_0}\right)\\
  & \hspace{.25in} -  2b^2\delta \left(\frac{n\E[(\hat{p}_f Y_i(1) - \hat{p}_f
    \overline{Y_i(1)})(f(\bX_i)-\hat{p}_f)]}{n_1(n-1)} \right. \\
  & \hspace{.75in} \left. + \frac{n\E[\{(1-\hat{p}_f) Y_i(0) - (1-\hat{p}_f) \overline{Y_i(0)}\}\{1-f(\bX_i)-(1-\hat{p}_f)\}]}{n_0(n-1)}\right)
  \\= \ & b^2 \delta p_f(1-p_f)\left(\frac{2}{n_1}\E[Y_i(1) \mid
          f(\bX_i)=1]+ \frac{2}{n_0}\E[Y_i(0) \mid f(\bX_i)=0]+ \delta
          \cdot \frac{n}{n_1n_0}\right)\\ & \hspace{.25in} - 2 b^2\delta \left\{\frac{p_f+p_f^2(n-2)}{n_1n^2}(\kappa_{11}-\E[Y_i(1)])+ \frac{(1-p_f)+(1-p_f)^2(n-2)}{n_0n^2}(\kappa_{00}-\E[Y_i(0)])\right\}
  \\= & b^2 \delta p_f(1-p_f)\left(\frac{2}{n_1}\E[Y_i(1) \mid
        f(\bX_i)=1]+ \frac{2}{n_0}\E[Y_i(0) \mid f(\bX_i)=0]+ \delta
        \cdot \frac{n}{n_1n_0}\right)+O\left(\frac{\delta}{n^2}\right).
\end{align*}
\qed

\section{Difference of the PAPE Variances}
\label{app:var_diff_pape}

To compute the difference of the two PAPE variances, we first define
the following,
\begin{eqnarray*}
	A_i & = & \hat{p}_fY_i(1)-\hat{p}_f\overline{Y(1)},\quad B_i \ = \ (1-\hat{p}_f)Y_i(0)-(1-\hat{p}_f)\overline{Y(0)},\\
	C_i & = & f(\bX_i)Y_i(1)-\overline{f(\bX)Y(1)},\quad D_i\ = \ (1-f(\bX_i))Y_i(0)-\overline{(1-f(\bX))Y(0)}.
\end{eqnarray*}
Then, a simple algebraic manipulation yields,
\begin{eqnarray*}
	\V(\hat{\tau}_f(\bZ_n)) & = & \frac{n^2}{(n-1)^2}\E\l(\sum_{i=1}^n \frac{A_i^2+C_i^2-2A_iC_i}{n_1(n-1)}+\frac{B_i^2+D_i^2-2B_iD_i}{n_0(n-1)} +\xi\r),\\
	\V(\hat{\tau}_f^*(\bZ_n)) & = & \frac{n^2}{(n-1)^2}\E\l(\sum_{i=1}^n \frac{A_i^2}{n_{r1}(n-1)}+\frac{B_i^2}{n_{r0}(n-1)}+\frac{C_i^2+D_i^2+2C_iD_i}{n_f(n-1)} +\xi\r),
\end{eqnarray*}
where
$\xi = \frac{1}{n^2} \l\{\tau_f^2 -n p_f(1-p_f) \tau^2 +
2(n-1)(2p_f-1) \tau_f\tau \r\}$.  Given these expressions, the
difference is given by,
\begin{eqnarray*}
	\V(\hat{\tau}_f^*)-\V(\hat{\tau}_f)&=&\frac{n^2}{(n-1)^2}\E\l(\sum_{i=1}^n \frac{A_i^2(n_1-n_{r1})}{n_{r1}n_1(n-1)}+\frac{B_i^2(n_0-n_{r0})}{n_{r0}n_0(n-1)}+\frac{C_i^2(n_1-n_f)}{n_fn_1(n-1)}+\frac{D_i^2(n_0-n_f)}{n_fn_0(n-1)}\r.\\
	& & +\biggl. \frac{2C_iD_i}{n_f(n-1)}+\frac{2A_iC_i}{n_1(n-1)}+\frac{2B_iD_i}{n_0(n-1)} \biggr).
\end{eqnarray*}
Under the assumption that $n_1=n_0=n_f=n_r=n/2$ and
$n_{r0}=n_{r1}=n/4$, we have,
\begin{eqnarray*}
	\V(\hat{\tau}_f^*(\bZ_n))-\V(\hat{\tau}_f(\bZ_n))&=&\frac{2n}{(n-1)^2}\E\l(\sum_{i=1}^n \frac{A_i^2+B_i^2}{n-1}+\frac{2C_iD_i+2A_iC_i+2B_iD_i}{n-1} \r)
	\\& = & \frac{2n}{(n-1)^2}\l[\E\l\{p_f^2S_1^2+(1-p_f)^2S_0^2\r\}
	+2\Cov(f(\bX_i)Y_i(1),(1-f(\bX_i))Y_i(0)) \r. \\
	& & \l. + \biggl.2p_f\Cov(f(\bX_i)Y_i(1),Y_i(1)) + 2(1-p_f)\Cov((1-f(\bX_i))Y_i(0),Y_i(0))\r.].
\end{eqnarray*}
Finally, note the following,
\begin{eqnarray*}
	& &\Cov(f(\bX_i)Y_i(1),\left(1-f(\bX_i)\right)Y_i(0))\\
	&= &\E\{f(\bX_i)Y_i(1)\left(f(\bX_i)-1\right)Y_i(0)\}-\E\{f(\bX_i)Y_i(1)\}\E\{\left(1-f(\bX_i)\right)Y_i(0)\}\\
	&=&-\Pr(f(\bX_i)=1)\E(Y_i(1)\mid f(\bX_i)=1)\Pr(f(\bX_i)=0)
	\E(Y_i(0)\mid f(\bX_i)=0)\\
	&=&-p_f(1-p_f)\E(Y_i(0)\mid f(\bX_i)=0)\E(Y_i(1)\mid f(\bX_i)=1),
\end{eqnarray*}
and
\begin{eqnarray*}
	p_f\Cov (f(\bX_i)Y_i(1),Y_i(1))
	&=& p_f^2 \l\{\E(Y_i^2(1) \mid f(\bX_i) = 1) -
	\E(Y_i(1))\E(Y_i(1) \mid f(\bX_i) =1)\r\} \\
	(1-p_f)\Cov (f(\bX_i)Y_i(0),Y_i(0))
	&=& (1-p_f)^2 \l\{\E(Y_i^2(0) \mid f(\bX_i) = 0) -
	\E(Y_i(0))\E(Y_i(0) \mid f(\bX_i) =0)\r\}.
\end{eqnarray*}
Hence, we have,
\begin{eqnarray*}
	& &\V(\hat{\tau}_f^*(\bZ_n))-\V(\hat{\tau}_f(\bZ_n)) \\
	& = & \frac{2n}{(n-1)^2}\l[p_f^2\V(Y_i(1))+(1-p_f)^2\V(Y_i(0))
	-2p_f(1-p_f)\E(Y_i(0) \mid f(\bX_i)=0)\E(Y_i(1)\mid f(\bX_i)=1) \r. \nonumber\\
	& & \hspace{.5in} +  2 p_f^2 \l\{\E(Y_i^2(1) \mid f(\bX_i) = 1) -
	\E(Y_i(1))\E(Y_i(1) \mid f(\bX_i) =1)\r\}\\
	& & \l. \hspace{.5in} +  2(1-p_f)^2 \l\{\E(Y_i^2(0) \mid f(\bX_i) = 0) -
	\E(Y_i(0))\E(Y_i(0) \mid f(\bX_i) =0)\r\} \r] \nonumber\\
	& = & \frac{2n}{(n-1)^2}\l[p_f^2\V(Y_i(1))+(1-p_f)^2\V(Y_i(0))
	-2p_f(1-p_f)\E(Y_i(0) \mid f(\bX_i)=0)\E(Y_i(1)\mid f(\bX_i)=1) \r. \nonumber\\
	& & \hspace{.5in} +  2 p_f^2 \l\{\V(Y_i(1) \mid f(\bX_i) = 1)
	\r.\\ & & +\l. (1-p_f)\{\E(Y_i(1)\mid f(\bX_i)=1)
	-\E(Y_i(1)\mid f(\bX_i)=0)\} \E(Y_i(1) \mid f(\bX_i) =1)\r\} \\
	& & \hspace{.5in} +  2 (1-p_f)^2 \l\{\V(Y_i(0) \mid f(\bX_i) = 0)
	\r.\\ & & \hspace{.75in} +\l. p_f\{\E(Y_i(0)\mid f(\bX_i)=0)
	-\E(Y_i(0)\mid f(\bX_i)=1)\} \E[Y_i(0) \mid f(\bX_i) =0]\r\}.
\end{eqnarray*}
\qed

\section{Comparison under the Simplifying Assumptions}
\label{app:comp_simple}

Define $M_{st} = \E(Y_i(s) \mid f(\bX_i) =t)$ for $s,t \in
\{0,1\}$. Then, we can rewrite the variance difference as,
\begin{eqnarray*}
	\V(\hat{\tau}_f^*(\bZ_n))-\V(\hat{\tau}_f(\bZ_n))
	& = & \frac{2n}{(n-1)^2}\l[p_f^2\V(Y_i(1))+(1-p_f)^2\V(Y_i(0))
	-2p_f(1-p_f)M_{11}M_{00}  \r. \\ &  & \hspace{.5in}+  2 p_f^2 \l\{\V(Y_i(1) \mid f(\bX_i) = 1) + (1-p_f)(M_{11}-M_{10}) M_{11}\r\} \\
	& & \hspace{.5in} +  \l. 2 (1-p_f)^2 \l\{\V(Y_i(0) \mid f(\bX_i) = 0)+ p_f(M_{00}-M_{01}) M_{00}\r\} \r].
\end{eqnarray*}
Now, consider a constant shift of the outcome variable, i.e.,
$Y_i(t)+\delta $ for $t=0,1$. Then, the variance difference becomes,
\begin{eqnarray*}
	& &\V(\hat{\tau}_f^*(\bZ_n))-\V(\hat{\tau}_f(\bZ_n)) \\
	& = & \frac{2n}{(n-1)^2}\l[p_f^2\V(Y_i(1))+(1-p_f)^2\V(Y_i(0))
	-2p_f(1-p_f)(M_{11}+\delta)(M_{00}+\delta)  \r. \\ &  & \hspace{.5in}+  2 p_f^2 \l\{\V(Y_i(1) \mid f(\bX_i) = 1) + (1-p_f)(M_{11}-M_{10}) (M_{11}+\delta)\r\} \\
	& & \hspace{.5in} +  \l. 2 (1-p_f)^2 \l\{\V(Y_i(0) \mid f(\bX_i) = 0)+ p_f(M_{00}-M_{01}) (M_{00}+\delta)\r\} \r]\\
	& = & \frac{2n}{(n-1)^2}\l[p_f^2\V(Y_i(1))+(1-p_f)^2\V(Y_i(0))
	-2p_f(1-p_f)M_{11}M_{00}  \r. \\ &  & \hspace{.5in}+  2 p_f^2 \l\{\V(Y_i(1) \mid f(\bX_i) = 1) + (1-p_f)(M_{11}-M_{10}) M_{11}\r\} \\
	& & \hspace{.5in} +   2 (1-p_f)^2 \l\{\V(Y_i(0) \mid f(\bX_i) = 0)+ p_f(M_{00}-M_{01}) M_{00}\r\} \\
	& & \hspace{.5in} -   2 p_f(1-p_f) \delta^2  + \l. 2p_f(1-p_f)\delta \l\{p_f(M_{11}-M_{10})+(1-p_f)(M_{00}-M_{01})-M_{11}-M_{00}\r\}\r]\\
	& = & \frac{2n}{(n-1)^2}\l[p_f^2\V(Y_i(1))+(1-p_f)^2\V(Y_i(0))
	-2p_f(1-p_f)M_{11}M_{00}  \r. \\ &  & \hspace{.5in}+  2 p_f^2 \l\{\V(Y_i(1) \mid f(\bX_i) = 1) + (1-p_f)(M_{11}-M_{10}) M_{11}\r\} \\
	& & \hspace{.5in} +   2 (1-p_f)^2 \l\{\V(Y_i(0) \mid f(\bX_i) = 0)+ p_f(M_{00}-M_{01}) M_{00}\r\} \\
	& & \hspace{.5in} -   2 p_f(1-p_f) \delta^2  - \l.2p_f(1-p_f)\delta \l\{p_f(M_{00}+M_{10})+(1-p_f)(M_{11}+M_{01})\r\}\r].
\end{eqnarray*}
Thus, we observe that the variance difference decreases by,
\begin{equation*}
	2 p_f(1-p_f) \delta^2  + 2p_f(1-p_f)\delta
	\l\{p_f(M_{00}+M_{10})+(1-p_f)(M_{11}+M_{01})\r\}.
\end{equation*}
Since the \textit{ex-ante} estimator is completely unaffected by this
change, the constant shift increases the variance of the
\textit{ex-post} evaluation estimator by the same amount.  Under the
simplifying assumptions, we have,
\begin{equation*}
	M_{11}+M_{01} \ = \ M_{00}+M_{10} \ = \  0.
\end{equation*}
Therefore, we can bound the difference in variance from below as follows,
\begin{eqnarray*}
	& & \V(\hat{\tau}_f^\ast(\bZ_n))-\V(\hat{\tau}_f(\bZ_n))  \\
	& = & \frac{2n}{(n-1)^2}\l[p_f^2\V(Y_i(1))+(1-p_f)^2\V(Y_i(0))
	-2p_f(1-p_f)M_{11}M_{00}  \r. \\ &  & \hspace{.5in}+  2 p_f^2 \l\{\V(Y_i(1) \mid f(\bX_i) = 1) + (1-p_f)(M_{11}-M_{10}) M_{11}\r\} \\
	& & \hspace{.5in} +  \l. 2 (1-p_f)^2 \l\{\V(Y_i(0) \mid f(\bX_i) = 0)+ p_f(M_{00}-M_{01}) M_{00}\r\} \r]\\
	& = & \frac{2n}{(n-1)^2}\l[p_f^2\V(Y_i(1))+(1-p_f)^2\V(Y_i(0))
	-2p_f(1-p_f)M_{11}M_{00}  \r. \\ &  & \hspace{.5in}+  2 p_f^2 \l\{\V(Y_i(1) \mid f(\bX_i) = 1) + (1-p_f)(M_{11}+M_{00}) M_{11}\r\} \\
	& & \hspace{.5in} +  \l. 2 (1-p_f)^2 \l\{\V(Y_i(0) \mid f(\bX_i) = 0)+ p_f(M_{00}+M_{11}) M_{00}\r\} \r]\\
	& = & \frac{2n}{(n-1)^2}\l[p_f^2\V(Y_i(1))+(1-p_f)^2\V(Y_i(0))
	+2p_f^2\V(Y_i(1) \mid f(\bX_i)=1) + 2(1-p_f)^2\V(Y_i(0) \mid f(\bX_i)=0) \r.\\
	& & \l. +2p_f(1-p_f)\l[(1-p_f)M_{00}^2+p_fM_{11}^2\r] \r] \\
	& \geq & 0.
\end{eqnarray*}
\qed

\section{The Outcome Model for the Numerical Study}
\label{app:sim}

\begin{align*}
  \E(Y_i(t) \mid \bX_i) &= 1.60+ 0.53\times
    x_{29}-3.80\times x_{29}(x_{29}-0.98)(x_{29}+0.86) -0.32 \times
    \bone\{x_{17}>0\}\\& + 0.21 \times \bone\{x_{42}>0\}-0.63 \times
    x_{27}+4.68 \times \bone\{x_{27}<-0.61\}-0.39 \times (x_{27}+0.91)
    \bone\{x_{27}<-0.91\}\\&+ 0.75 \times \bone\{x_{30}\leq0\}-1.22
    \times \bone\{x_{54}\leq0\}+0.11 \times x_{37}
    \bone\{x_{4}\leq0\}-0.71 \times \bone\{x_{17}\leq0, t=0\}\\&-1.82
    \times \bone\{x_{42}\leq 0,t=1\}+0.28 \times \bone\{x_{30}\leq
    0,t=0\}\\&+\{0.58\times x_{29}-9.42 \times
    x_{29}(x_{29}-0.67)(x_{29}+0.34)\}\times\bone\{t=1\}\\&+(0.44
    \times x_{27}-4.87\times
    \bone\{x_{27}<-0.80\})\times\bone\{t=0\}-2.54 \times \bone\{t=0,
    x_{54}\leq 0\}.
\end{align*}

\end{document}